\renewcommand\footnotetextcopyrightpermission[1]{} 
\crefname{line}{line}{lines}
\newcommand{\strong}{strong PIP}
\newcommand{\weak}{relaxed PIP}
\newcommand{\dproperty}{Decomposable Property}
\newcommand{\naive}{\algname{RP-Na\"ive}}
\newcommand{\rpflat}{\algname{RP-Flat}}
\newcommand{\oneres}{\algname{RP-OneRes}}
\newcommand{\final}{\algname{RP-Final}}
\newcommand{\forkins}{\texttt{fork}}
\newcommand{\thread}{thread}
\newcommand{\pramresults}{~\cite{guan1991time,langston1993time,zheng1999efficient,guan1992parallel,huang1989stable,pietracaprina2015space}}
\newcommand{\depth}{span}
\newcommand{\Boruvka}{Bor\r{u}vka's}
\newcommand{\cmark}{\ding{51}}%
\begin{document}
\title{Parallel In-Place Algorithms: Theory and Practice}
  \author{Yan Gu}
  \affiliation{\institution{UC Riverside}}
  \email{ygu@cs.ucr.edu}

  \author{Omar Obeya}
  \affiliation{\institution{MIT CSAIL}}
  \email{obeya@mit.edu}
  \author{Julian Shun}
  \affiliation{\institution{MIT CSAIL}}
  \email{jshun@mit.edu}

\titlenote{ This is the full version of the paper appearing in the \emph{Proceedings of the SIAM Symposium on Algorithmic Principles of Computer Systems}, 2021.}

\begin{abstract}

Many parallel algorithms use at least linear auxiliary space in the
size of the input to enable computations to be done independently
without conflicts.  Unfortunately, this extra space can be prohibitive
for memory-limited machines, preventing large inputs from being
processed.  Therefore, it is desirable to design parallel in-place
algorithms that use sublinear (or even polylogarithmic) auxiliary space.

In this paper, we bridge the gap between theory and practice for
parallel in-place (PIP) algorithms.  We first define two computational
models based on fork-join parallelism, which reflect modern
parallel programming environments.  We then introduce a variety of new
parallel in-place algorithms that are simple and efficient, both
in theory and in practice.  Our algorithmic highlight is the
Decomposable Property introduced in this paper, which enables existing
non-in-place but highly-optimized parallel algorithms to be converted
into parallel in-place algorithms.  Using this property, we obtain
algorithms for random permutation, list contraction, tree contraction,
and merging that take linear work, $O(n^{1-\epsilon})$ auxiliary
space, and $O(n^\epsilon\cdot\text{polylog}(n))$ span for $0<\epsilon<1$.
We also present new parallel in-place algorithms for scan,
filter, merge, connectivity, biconnectivity, and minimum spanning forest using
other techniques.

In addition to theoretical results, we present experimental results
for implementations of many of our parallel in-place algorithms.  We
show that on a 72-core machine with two-way hyper-threading, the
parallel in-place algorithms usually outperform existing parallel
algorithms for the same problems that use linear auxiliary space,
indicating that the theory developed in this paper indeed leads to
practical benefits in terms of both space usage and running time.

\end{abstract}

\maketitle

\section{Introduction}

Due to the rise of multicore machines with tens to hundreds of cores
and terabytes of memory, and the availability of programming languages
and tools that simplify shared-memory parallel computing, many parallel
algorithms have been designed for large-scale data processing.
Compared to distributed or external-memory solutions, one of the
biggest challenge for using multicores for large-scale data processing
is the limited memory capacity of a single machine. Traditionally,
parallel algorithm design has mostly focused on solutions with low
work (number of operations) and \depth{} (depth or longest critical
path) complexities. However, to enable data to be processed in
parallel without conflicts, many existing parallel algorithms are not
in-place, in that they require $\Omega(n)$ auxiliary memory for an
input of size $n$.
For example, in the shuffling step of distribution sort (sample sort) or radix sort
algorithms, even if we know the destination of each element in the
final sorted array, it is difficult to directly move all of them to their
final locations in parallel in the same input array due to conflicts.
As a result, parallel algorithms for this task
(e.g.,~\cite{blelloch2010low,Blelloch91}) use an auxiliary array of
linear size to copy the elements into their correct final locations.

While many parallel multicore algorithms are work-efficient and have
low \depth{}, the $\Omega(n)$ auxiliary memory required by the algorithms
can prevent larger inputs from being processed. Purchasing or renting
machines multicore machines with larger memory capacities is an
option, but for large enough machines, the cost increases roughly
linearly with the memory capacity, as shown in
Figure~\ref{fig:price}. Furthermore, additional energy costs need to
be paid for machines that are owned, and the energy cost increases
proportionally with the memory capacity. Therefore, designing
\emph{parallel in-place (PIP) algorithms}, which use auxiliary space that is
sublinear (or even polylogarithmic) in the input size, can lead to
considerable savings. In addition, in-place algorithms can also reduce
the number of cache misses and page faults due to their lower memory
footprint, which in turn can improve overall performance, especially
in parallel algorithms where memory bandwidth and/or latency is a
scalability bottleneck.

\begin{figure*}[h]
\centering
  \includegraphics[width=.85\columnwidth]{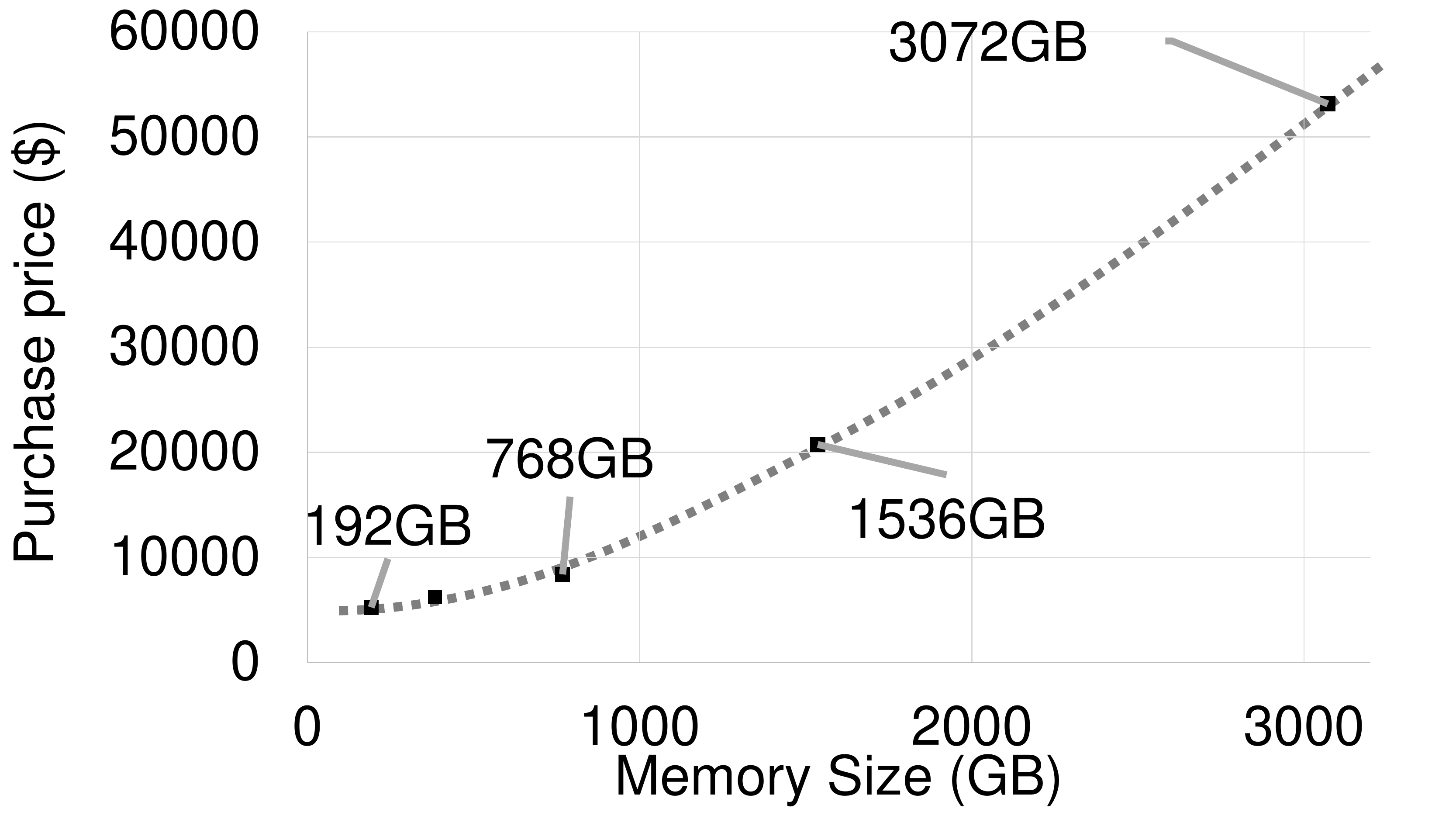}
  \includegraphics[width=.85\columnwidth]{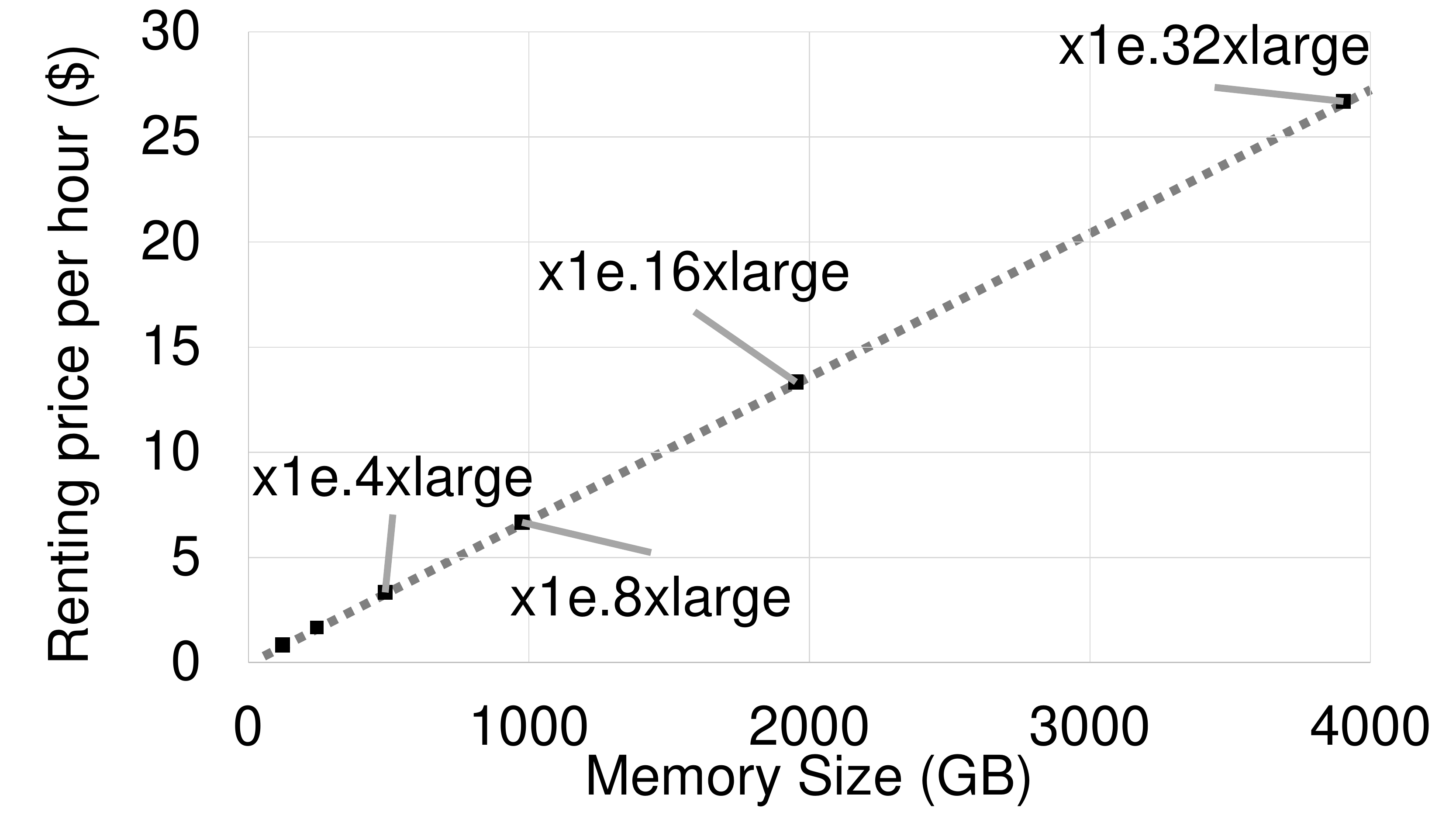}
\caption{ \normalfont Purchase and rental prices for multicore
  servers as a function of memory size.
  The left figure shows the purchase price of an RAX
  XT24-42S1 server with 72 CPU cores (Xeon Gold 5220) for different
  DRAM sizes.  The right figure shows the rental price of AWS EC2
  x1e-series multicore instances vs. the DRAM capacity. In both
  cases, the DRAM capacity is the dominant part of the overall cost. }
\label{fig:price}
\end{figure*}

There has been recent work studying theoretically-efficient and
practical parallel in-place algorithms for sample
sort~\cite{axtmann2017place}, radix
sort~\cite{obeya2019theoretically}, partition~\cite{kuszmaul2020cache},
and constructing implicit search tree layouts~\cite{berney2018beyond}.
These PIP algorithms achieve better performance than previous
algorithms in almost all cases.  While these algorithms are insightful
and motivate the PIP setting, they are algorithms designed for
specific problems and have different notions of what ``in-place''
means in the parallel setting.  In this paper, we generalize the ideas
in previous work into two models, which we refer to as the
\defn{strong PIP model} and the \defn{relaxed PIP model}.  At a high
level, the relaxed PIP model provides similar properties to the
classic in-place PRAM model, and the strong PIP model puts further
restrictions on memory allocation that allows PIP algorithms to
simultaneously achieve small auxiliary space and low span.  We provide
more details on these models in Section~\ref{sec:model}.

The main contribution of this paper is a collection of new PIP
algorithms in the two models, which include algorithms for solving
scan, merge, filter, partition, sorting, random permutation, list
contraction, tree contraction, and several graph problems
(connectivity, biconnectivity, and minimum spanning forest).  The
results are summarized in Table~\ref{tab:overview}, and discussed in
more detail in Sections~\ref{sec:decomp}--\ref{sec:relax}.  Some of
the algorithms are known, and we summarize them in this paper.  The
rest are new to the best of our knowledge, and we distinguish them by
presenting our results in theorems and corollaries.

\begin{table}[t]
  \small
  \centering
    \begin{tabular}{@{}@{}p{1.5cm}l@{}c@{}c@{}}
    \toprule
    \multicolumn{1}{c}{Model} & \multicolumn{1}{c}{Problems} & \multicolumn{1}{c}{Work-efficient} &  \\
    \midrule
    \multirow{6}[0]{1.5cm}{Strong PIP Model} & Permuting tree layout & \cmark & \cite{berney2018beyond} \\
          & Reduce, rotating & \cmark &  \\
          & Scan (prefix sum) & \cmark & * \\
          & Filter, partition, quicksort &       &  \\
          & Merging, mergesort &       &  \\
          & Set operations & \cmark & \cite{BlellochFS16} \\
          \midrule
    \multirow{7}[0]{1.5cm}{Relaxed PIP Model} & Random permutation & \cmark & * \\
          & List and tree contraction & \cmark & * \\
          & Merging, mergesort & \cmark & * \\
          & Filter, partition, quicksort & \cmark &  \\
          & (Bi)Connectivity &       &  * \\
          & Minimum spanning forest &       & * \\
          \bottomrule
    \end{tabular}
    \caption{\label{tab:overview} Algorithms based on the \strong{} model and the \weak{} model.  ``Work-efficient'' indicates that the PIP algorithm has the same asymptotic work (number of operations) as the best sequential algorithm with no restriction on auxiliary space.
      Algorithms marked with * are our main contributions, while other algorithms are either known or  require slight changes to existing algorithms.  Merging and mergesort in the relaxed PIP model has been presented in~\cite{huang1988practical}, but our new algorithm in this paper is much simpler.  If a problem has a work-efficient solution in the \strong{} model, then it will not be listed again in the \weak{} model.}
    
\end{table}%

\begin{figure}[th]
\centering
  \includegraphics[width=\columnwidth]{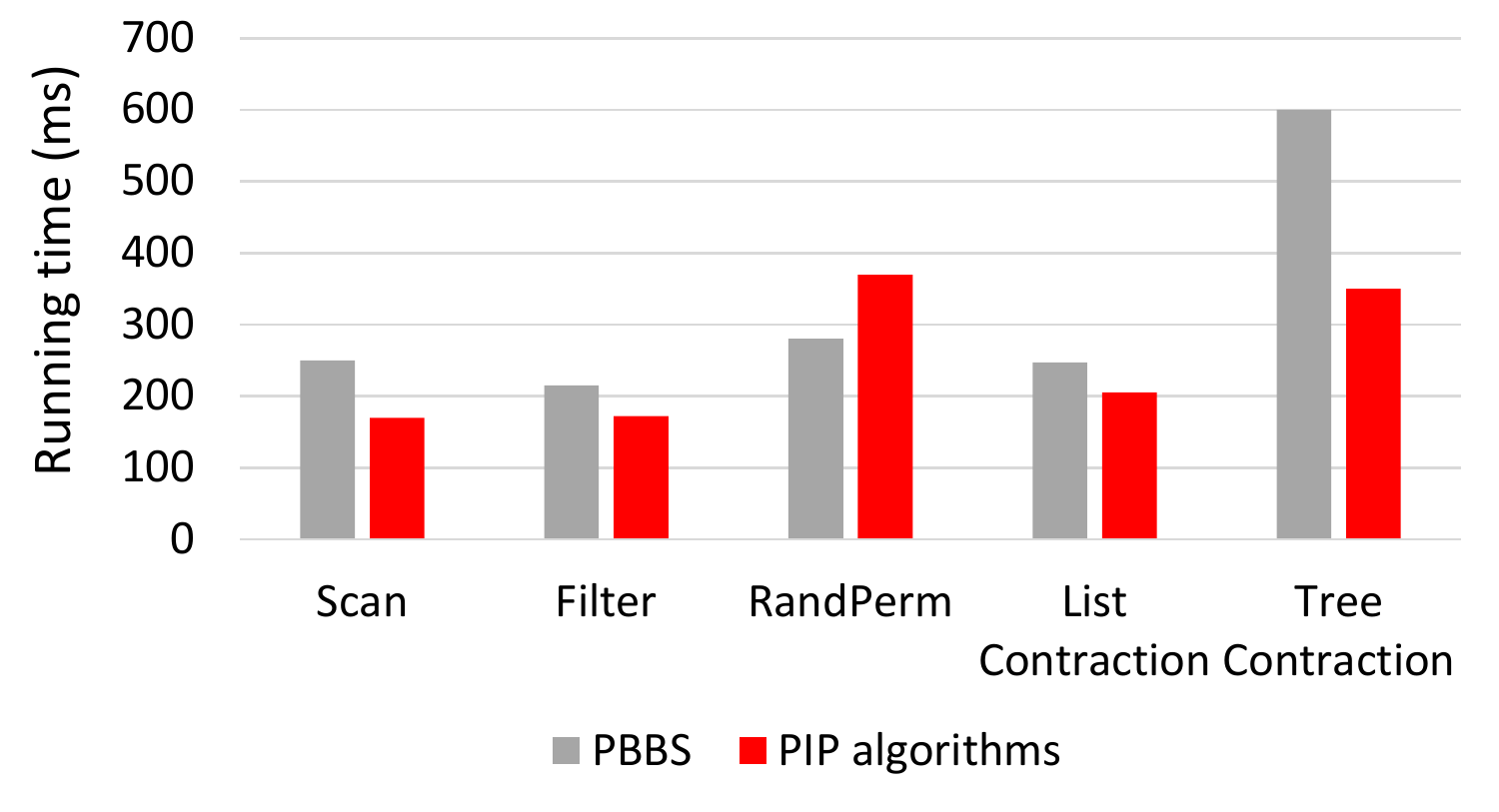}
\caption{\label{fig:summary} Running times for our new
  PIP algorithms compared to the non-in-place implementations from
  PBBS~\cite{shun2012brief}.
  For scan and filter, the input size is $10^9$, and for the other algorithms,
  the input size is $10^8$.  The running times are obtained on a 72-core machine
  with two-way hyper-threading, and more details are presented in
  Section~\ref{sec:exp}.  In all cases, the new PIP algorithms have
  competitive or better performance, with the additional advantage of
  using less auxiliary space.}
\end{figure}

The algorithmic highlight in this paper is the \dproperty{} defined in
Section~\ref{sec:decomp}.  The high-level idea is to iteratively
reduce a problem to a subproblem of sufficiently smaller size, where the
the reduction can be performed using a non-in-place algorithm for the same problem.
If we can perform the reduction efficiently, then this leads to
 an efficient algorithm in
the \weak{} model.
This means that we can convert any existing non-in-place but highly-optimized
parallel algorithm to an efficient PIP algorithm.
We show many examples of this approach in this paper,
including algorithms for random permutation, list contraction, tree contraction, merging, and
mergesort.
We have also designed other PIP algorithms without
using the \dproperty{}, including algorithms for scan, filter, and various graph problems.

We implement five of our in-place algorithms, and compare
them to the optimized non-in-place implementations in the Problem
Based Benchmark Suite (PBBS)~\cite{shun2012brief}.  The running time
comparisons for certain input sizes are shown in
Figure~\ref{fig:summary} and we provide more details in
Section~\ref{sec:exp}.
We show that in addition to lower space usage, our in-place algorithms can have competitive or even
better performance compared to their non-in-place counterparts due to
their smaller memory footprint, indicating that the theory for PIP algorithms developed in this paper can lead to practical outcomes. Our implementations are publicly-available at {\small \url{https://github.com/ucrparlay/PIP-algorithms}}.

\section{Preliminaries}\label{sec:prelim}

\myparagraph{Work-Span Model.}  In this paper, we use the classic
work-\depth{} model for fork-join parallelism with binary forking for analyzing parallel
algorithms~\cite{CLRS,blelloch2020optimal}.
Unlike machine-based cost models such as the PRAM model~\cite{JaJa92}, this model is a language-based model, and we will justify the use of this model in \cref{sec:model}.
In this model, we assume that we have a set of \thread{}s that have
access to a shared memory.  Each \thread{} supports the same
operations as in the sequential RAM model, but also has a \forkins{}
instruction that forks two new child \thread{}s.  When a \thread{}
performs a \forkins{}, the two child \thread{}s all start by running
the next instruction, and the original \thread{} is suspended until
all of its children terminate.  A computation starts with a single {root}
\thread{} and finishes when that root \thread{} finishes.  The
\defn{work} of an algorithm is the total number of instructions and
the \defn{\depth{}} (depth) is the length of the longest sequence of dependent instructions in
the computation.
A thread can allocate a fixed size of memory that is either shared by all threads (referred to as ``heap-allocated'' memory), or private to this thread and other threads that it forks  (referred to as ``stack-allocated'' memory).
The latter case requires freeing the memory allocated after a fork and before the corresponding join.
A randomized work-stealing scheduler, which is widely used in real-world parallel languages such as Cilk, TBB, and X10, can execute an algorithm with work $W$ and \depth{} $D$ in $W/P+O(D)$ time \whp{}\footnote{We say $O(f(n))$ \defn{with high probability (\whp{})} to indicate $O(cf(n))$ with probability at least $1-n^{-c}$ for $c \geq 1$, where $n$ is the input size.} on $P$ processors~\cite{BL98,ABP01}.

In this paper, we also analyze the auxiliary space used, and we measure space in units of words.

\myparagraph{Work-Efficient and Low-Span Parallel Algorithms.}
The goal of designing a parallel algorithm is to achieve work-efficiency and low span.
Work-efficiency means that the algorithm asymptotically uses no more work than the best (sequential) algorithm for the problem.
Low span means that the longest sequence of dependent operations has polylogarithmic length.
Achieving low span can lead to practical benefits.
For instance, the number of steals in a work-stealing scheduler can be bounded by $O(PD)$, which is  proportional to the number of extra cache misses due to parallelism~\cite{Acar02,blelloch2010low}.
Low span also leads to fewer rounds of global synchronization, which can lead to significant performance improvements on modern architectures.

\myparagraph{The ``busy-leaves'' property.}
When using the Cilk work-stealing
scheduler, a fork-join program that uses $S_1$ words of space in a
stack-allocated fashion when run on one processor
will use $O(PS_1)$ words of space when run on $P$ processors~\cite{BL98}. This is a consequence of the ``busy-leaves'' property of the work-stealing scheduler.

\myparagraph{Problem definitions.}  Here we define the problems that are used
in multiple places in this paper.  Other problems are defined in their
respective sections.  Consider a sequence $[a_1, a_2, \ldots , a_n]$,
an associative binary operator $\oplus$, and an identity element $i$.
\defn{Reduce} returns $a_1 \oplus a_2 \oplus \ldots \oplus a_n$.
\defn{Scan} (short for an exclusive scan) returns $[i, a_1, (a_1
  \oplus a_2), \ldots , (a_1 \oplus a_2 \oplus \ldots \oplus
  a_{n-1})]$, in addition to the sum of all elements.  An inclusive
scan returns $[a_1, (a_1 \oplus a_2), \ldots , (a_1 \oplus a_2 \oplus
  \ldots \oplus a_n)]$.
\defn{Filter} takes an array $A$ and a
predicate function $f$, and returns a new array containing $a \in A$
for which $f(a)$ is true.
\defn{Partition} is similar to filter, but in addition to placing the
elements $a$ where $f(a)$ is true at the beginning of the array,
elements $a$ for which $f(a)$ is false are placed at the end of the
array.  We say that a filter or partition is \defn{stable} if the elements in the output are in the
same order as they appear in $A$.

\section{Models for Parallel In-Place Algorithms}\label{sec:model}

In the past, PIP algorithms have been designed based the
\emph{in-place PRAM model}~\pramresults.  However, this model and the
PRAM model itself have some limitations, which we describe in
Section~\ref{sec:previous-models}.  Hence, recent work on PIP
algorithms~\cite{axtmann2017place,obeya2019theoretically,kuszmaul2020cache,berney2018beyond}
incorporate the PIP setting in the newer work-span model, although
they use different notions of ``in-place'' for the algorithms.  In
this paper, we generalize the ideas into two models, which we refer to
as the \defn{strong PIP model} and the \defn{relaxed PIP model}.  At a
high level, the relaxed PIP model provides similar properties as the
in-place PRAM model, and the strong PIP model puts further
restrictions on memory allocation that enables PIP algorithms to
achieve small auxiliary space and low span simultaneously.  Based on
our model definitions, algorithms
in~\cite{kuszmaul2020cache,berney2018beyond} can be mapped to the
strong PIP model, and algorithms
in~\cite{axtmann2017place,obeya2019theoretically} can be mapped to the
relaxed PIP model.  In this section, we will first define these two
models, and then discuss their relationship to existing PIP models.

\subsection{The Strong PIP Model}

We start by defining the
\strong{} model based on the work-span model for fork-join parallelism.

\begin{definition}[Strong PIP model and algorithms]
The \textbf{\strong{} model} assumes a fork-join computation only using
$O(\log n)$-word auxiliary space in a stack-allocated
fashion for an input size of $n$ when run sequentially (with no auxiliary heap-allocated space).  We say that an algorithm is
\textbf{\strong{}} if it runs in the \strong{} model and has
polylogarithmic \depth{}.
\end{definition}

For a PIP algorithm in the \strong{} model, the Cilk work-stealing
scheduler can bound the total auxiliary space to be
$O(P\log n)$ words, where $P$ is the number of processors~\cite{BL98}.
All \strong{} algorithms presented in this paper, as well as existing ones~\cite{berney2018beyond,kuszmaul2020cache}, only use $O(\log n)$-word stack-allocated auxiliary space sequentially.
We say a \strong{} algorithm is \defn{optimal} if its work and \depth{} bounds match the best non-in-place counterpart.

\subsection{The Relaxed PIP Model}

Many existing PIP algorithms~\cite{guan1991time,langston1993time,zheng1999efficient,guan1992parallel,huang1989stable,pietracaprina2015space,axtmann2017place,obeya2019theoretically} exhibit a tradeoff between additional space $S$ and \depth{} $D$, such that $S\cdot D=\tilde{\Theta}(n)$.\footnote{We use $\tilde{O}(f(n))$ to hide polylogarithmic factors.}
We capture these algorithms in our \weak{} model, and refer to these algorithms as \weak{} algorithms.

\begin{definition}[Relaxed PIP model and algorithms]
The \textbf{\weak{} model} assumes a fork-join computation using
$O(\log n)$-word stack-allocated space sequentially and $O(n^{1-\epsilon})$ shared (heap-allocated) auxiliary space for an input of size~$n$ and some constant $0<\epsilon<1$.
We say that an algorithm is \textbf{\weak{}} if it runs in the
\weak{} model and has $O(n^{\epsilon}\cdot \polylog(n))$ \depth{} for all values of $\epsilon$.
\end{definition}

The Cilk work-stealing scheduler can bound the total auxiliary space of \weak{} algorithms to be
$O(n^{1-\epsilon}+P\log n)$ on $P$ processors.
For brevity, we refer to the auxiliary space in future references to the \weak{} model as just the heap-allocated space.
Algorithms in the \weak{} model allow sublinear auxiliary space, which is less
restrictive than in the \strong{} model. This provides
more flexibility in algorithm design, while still being
useful in practice as \weak{} algorithms still use less space than
their non-in-place counterparts.  In the next section, we introduce a
general property, which allows any existing parallel algorithm with
polylogarithmic span that satisfies the property to be easily
converted into a \weak{} algorithm.

\subsection{Relationship to Previous Models}\label{sec:previous-models}

PIP algorithms have been analyzed in the in-place PRAM model for decades. Recent work~\cite{axtmann2017place,obeya2019theoretically,kuszmaul2020cache,berney2018beyond} has designed in-place algorithms into the work-span model, but they only provide algorithms for specific problems rather than focusing on the general
parallel in-place setting. 
In this paper, we formally define the parallel in-place models, and justify the models
by discussing the limitations of the previous in-place PRAM, and how our new models overcome it.

\myparagraph{The in-place PRAM.}
Most existing parallel in-place algorithms have been designed in the in-place PRAM~\cite{guan1991time,langston1993time,zheng1999efficient,guan1992parallel,huang1989stable,pietracaprina2015space}.
The PRAM has $P$ processors that are fully synchronized between steps, and
the running time of an algorithm is the maximum number of steps $T$
used by any processor.  In this model, the auxiliary space~$S$ is the sum of the total space
used across all processors.
As pointed out by Berney et al.~\cite{berney2018beyond}, each processor on a PRAM requires $\Omega(1)$ (usually $\Omega(\log n)$) auxiliary space to do anything useful (e.g., storing the program counter and using registers).
This indicates that if the total auxiliary space $S$ for all processors is bounded to be small, then the parallelism is also bounded by $O(S)$.
This is because even if we have an infinite number of processors, no more than $S$ of them can do useful work simultaneously.
The overall PRAM time is $\Omega(W/S)$, where $W$ is the overall work in the algorithm.
Hence, in the PRAM setting, an algorithm can only achieve high parallelism when $S$ is \emph{asymptotically close} to $W$.
This has been described by Langston et al.~\cite{langston1993time,zheng1999efficient,guan1992parallel,huang1989stable} as the time-space tradeoff in the PRAM---if the input size is $n$, then the product of auxiliary space $S$ and PRAM time $T$ is $\tilde{\Omega}(n)$, and an algorithm is optimal on a PRAM when $S\cdot T=\tilde{\Theta}(n)$.
This limitation arises because the analysis of parallelism and auxiliary space are intertwined in the in-place PRAM.

\myparagraph{Decoupling the analysis between parallelism and auxiliary space.}
Parallel algorithms with low span have many practical benefits even
for small processor counts, due to lower scheduling overhead and improved cache locality, as discussed in \cref{sec:prelim}.  However, low span cannot
be achieved in the in-place PRAM unless we use nearly linear auxiliary
space.  Our goal is to decouple the analysis of parallelism from the
restriction of auxiliary space.  In both the \strong{} and \weak{} models, the
auxiliary space in measured in the sequential setting, whereas the span is analyzed
based on the fork-join computation graph. This decouples the space analysis from the span analysis.
Furthermore, in the \strong{} model, low span and
small auxiliary space can be achieved simultaneously.

To achieve the decoupling, we use  the separation of the private ``stack-allocated'' memory from the shared ``heap-allocated'' memory in work-span model.
The heap-allocated memory is what we usually refer to as the shared memory, and is independent of the number of processors.
The stack-allocated memory is per processor, and the ``busy-leaves'' property guarantees that the overall space usage of a program is $O(PS_1)$ when it is run on  $P$  processors, where $S_1$ is the amount of stack-allocated memory when running the algorithm sequentially.
Since $P$ is usually modest in practice, if the stack-allocated memory is small (e.g., $O(\log n)$), then the auxiliary space size $O(PS_1)$ will be negligible on modern machines.
As a result, the abstraction of the stack-allocated memory separates the per-processor need from the shared resource, and overcomes the limitation of the in-place PRAM by dynamically mapping the algorithm on a machine with $P$ processors, with the auxiliary space guarantee.

In addition to the advantages discussed above,
the work-span model simplifies parallel algorithm design and analysis, and algorithm designers do not need to worry about low-level details related to hardware such as memory allocation, caching, and load balancing.
Recent papers~\cite{berney2018beyond,kuszmaul2020cache,obeya2019theoretically} have made a similar observation on the limitation of the in-place PRAM model, and analyzed the in-place algorithms using the work-span model.
In this paper, we explicitly formalize this discussion and define the two PIP models based on the work-span model.

\myparagraph{Other practical considerations.}
Here we describe additional benefits to use the new PIP models based on the work-span model.
Modern parallel programming languages, such as Cilk, OpenMP, TBB, and X10, directly support algorithms designed for the work-span model using fork-join parallelism, with efficient runtime schedulers.
In contrast to the PRAM, in which computations have many synchronization points,
computations in the work-span model can be highly asynchronous. This is a practical advantage due to the high synchronization overheads on modern hardware~\cite{blelloch2020optimal}.
Furthermore, the PIP algorithms in this paper based on our new models have additional guarantees with respect to multiprogrammed
environments~\cite{ABP01}, cache complexity~\cite{Acar02,BlellochFinemanGibbonsEtAl2011,blelloch2010low},
write-efficiency~\cite{BBFGGMS16,blelloch2015sorting,blelloch2016efficient}, and resource-obliviousness~\cite{Cole2017}.

\section{\dproperty{}}\label{sec:decomp}

Designing \strong{} algorithms is generally challenging (we present several in Section~\ref{sec:strong}, but if we relax the auxiliary space to sublinear (the \weak{} model), then we believe that PIP algorithms can be designed for many more problems.
In this section, we introduce the \defn{\dproperty{}}, which enables  any existing parallel algorithm that satisfies the property to be converted into a \weak{} algorithm.
If the existing parallel algorithm is work-efficient, then the corresponding \weak{} algorithm will also be work-efficient.

\begin{theorem}[\dproperty{}]\label{thm:generating}
  Consider a problem with input size $n$ and a parallel algorithm to solve it with work $W(n)=O(n\cdot \polylog(n))$. Let $r=W(n)/n$.
  If the problem can be reduced to a subproblem of size $n-n^{1-\epsilon}/r$ using $n^{1-\epsilon}$ work and space for some $0<\epsilon<1$, and polylogarithmic \depth{} $D(n)$, then there is a \weak{} algorithm for this problem with $W(n)$ work, $O(n^\epsilon\cdot\polylog(n))$ span, and $O(n^{1-\epsilon})$ auxiliary space.
\end{theorem}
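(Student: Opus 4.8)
\smallskip
\noindent\emph{Proof plan.}~The plan is to obtain the \weak{} algorithm by \emph{iterating} the given reduction in a single loop. Starting from the size-$n$ input, we repeatedly apply the reduction to the current instance: if it has size $m$, one application turns it into a genuine subproblem of size $m - m^{1-\epsilon}/r_m$, where $r_m := W(m)/m = O(\polylog n)$, using $O(m^{1-\epsilon})$ work, $O(m^{1-\epsilon})$ scratch space, and $D(m) = O(\polylog n)$ span. Between iterations we free the scratch space, and the shrinking subproblem stays stored in place inside the input array, so nothing accumulates. We iterate until the instance reaches constant size (which happens once $m^{1-\epsilon}/r_m < 1$, i.e.\ once $m$ is below a fixed threshold, since $\epsilon$ is constant and $W(m) = O(m\polylog m)$), and then solve that tiny instance directly in $O(1)$ time and space. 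Because this is a loop and not a recursion, at most $O(\log n)$ stack-allocated words are live at once.

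First I would bound the number $T$ of iterations, since the span, work, and space bounds all follow from it. From the recurrence $m_{i+1} = m_i - m_i^{1-\epsilon}/r_{m_i}$, a short calculation using the concavity of $x\mapsto x^{\epsilon}$ shows that $m_i^{\epsilon}$ drops by $\Omega(1/\polylog n)$ at every step, so $T = O(n^{\epsilon}\cdot\polylog n)$. The $i$-th reduction has span $D(m_i) = O(\polylog n)$ and the iterations are sequential (each operates on the previous one's output), so the total span is $O(T\cdot\polylog n) = O(n^{\epsilon}\cdot\polylog n)$. For the work, the $i$-th reduction spends $O(m_i^{1-\epsilon})$ work and removes $m_i - m_{i+1} = m_i^{1-\epsilon}/r_{m_i}$ elements, i.e.\ $O(r_{m_i})$ work per element removed; since $r_{m_i} = W(m_i)/m_i = O(r)$ for $m_i \le n$ (the work bounds of interest have non-decreasing work-per-element) and the total number of elements removed is at most $n$, the total reduction work is $O(r\cdot n) = O(W(n))$, which is exactly why $r$ is defined as $W(n)/n$. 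The final trivial step adds $O(1)$, so the algorithm is work-efficient whenever the input algorithm is. The live auxiliary space at any moment is just the scratch of one reduction, i.e.\ $O(\max_{m\le n} m^{1-\epsilon}) = O(n^{1-\epsilon})$ heap words, plus $O(\log n)$ stack words, placing the algorithm in the \weak{} model.

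The step I expect to be the real obstacle is not the counting but rather justifying that ``iterating the reduction'' is genuinely in place: that the bookkeeping does not accumulate to $\Omega(n)$ words across the $\Theta(n^{\epsilon}\cdot\polylog n)$ levels. Three things must hold, and they are what the \dproperty{} has to encode and what must be checked for each target problem: the loop structure keeps stack usage at $O(\log n)$ rather than $\Theta(T\log n)$; the reduction's own $O(m^{1-\epsilon})$ working memory is heap-allocated and released within a level; and the subproblem of size $m - m^{1-\epsilon}/r_m$, together with any per-level data needed to later lift its solution, lives inside the region of the input array that the subproblem has vacated, so that if a problem requires an inverse/expansion pass it can be run in reverse within the same $O(n^{1-\epsilon})$ space. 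Pinning down precisely this decomposability condition, and verifying it for random permutation, list and tree contraction, and merging, is where the substance of the result lies; the calculations above are then routine.
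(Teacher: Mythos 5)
Your proposal is correct and follows essentially the same route as the paper: iterate the reduction until the instance is exhausted, bound the number of rounds by $O(n^{\epsilon}\cdot\polylog(n))$, and multiply by the per-round span, work, and space. The only difference is that the paper fixes the per-round shrinkage at $n^{1-\epsilon}/r$ relative to the \emph{original} $n$ (making the round count trivially $rn^{\epsilon}$), whereas you let it scale with the current size $m_i$ and recover the same count via the $m_i^{\epsilon}$ potential argument; your added remarks on work-efficiency and on keeping the bookkeeping in place are consistent with how the paper handles these points in its applications.
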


\begin{proof}
  We iteratively reduce the problem size by $n^{1-\epsilon}/r$
  (this size remains the same throughout the algorithm), and each round takes $O(n^{1-\epsilon})$ work and space.
  Since $r=\polylog(n)$, this means $n^{1-\epsilon}$ is asymptotically larger than $r$, and
  we can reduce the problem size by at least one on each round.
  By applying this reduction for $rn^{\epsilon}$ rounds, we have a \weak{} algorithm with $W(n)$ work and $D(n)\cdot rn^\epsilon=O(n^\epsilon\cdot \polylog(n))$ \depth{}, using $O(n^{1-\epsilon})$ auxiliary space.
\end{proof}

The high-level idea of the \dproperty{} is that, for a problem of size $n$, if we can reduce the problem size to $n-n'$ using work proportional to $n'$, then we can control the additional space by varying the size of $n'$ to fit in the auxiliary space.
This provides theoretically-efficient \weak{} algorithms for parallel algorithms that satisfy this property.
On the practical side, we observe that this reduction step usually corresponds to solving a subproblem that is the same as the original problem but with a smaller size.
Hence, we can use the best existing non-in-place algorithms for this step. We show in Section~\ref{sec:exp} that the performance of our \weak{} algorithms using this approach is competitive or faster than their non-in-place counterparts.
In the rest of this section, we introduce some algorithms that satisfy the Decomposable Property.

\subsection{Random Permutation}

Generating random permutations in parallel is a useful subroutine in many parallel algorithms.
Many parallel algorithms (e.g., randomized incremental algorithms) require randomly permuting the input elements to achieve strong theoretical guarantees.
The sequential Knuth~\cite{Knuth69,Durstenfeld1964} shuffle algorithm, shown below, has linear work, where $H[i]$ is an integer uniformly drawn from $[1,\ldots, i]$, and $A$ is the array to be permuted.

{
\setlength{\interspacetitleruled}{0pt}%
\setlength{\algotitleheightrule}{0pt}%
\begin{algorithm2e}[h]
\DontPrintSemicolon
\Fn {\mf{Knuth-Shuffle}(A, H)} {
\lFor {$i \leftarrow n$ to $1$} {$A[i]\gets i$}
\lFor {$i \leftarrow n$ to $1$} {
    swap($A[i]$, $A[H[i]]$)
}
}
\end{algorithm2e}
}

\begin{algorithm2e}[!t]
\caption{$\mf{Parallel Knuth-Shuffle}(A, H)$~\cite{shun2015sequential}}
\label{alg:parallel-rp}
\SetKwFor{ParForEach}{parallel foreach}{do}{endfch}
\SetKwFor{ParFor}{parallel for}{do}{endfch}
\SetKw{Break}{break}
  \DontPrintSemicolon
$R\gets\{-1,\ldots,-1\}$\\
\lParFor {$i \leftarrow n$ to $1$} {
    $A[i]\gets i$
}
\While {swaps unfinished} {
  \ParForEach {unfinished swap $(s, H[s])$\label{line:resstart}} {
    $R[s]\gets \max(R[s],s)$\label{line:line6}\\
    $R[H[s]]\gets \max(R[H[s]],s)$\label{line:resend}
  }
  \ParForEach {unfinished swap $(s, H[s])$} {
       \lIf {$R[s] = s$ and $R[H[s]] = s$\label{line:ressuc}} {
           swap$(A[H[s]], A[s])$
       }
    }
  Reset $R$ and pack the leftover swaps (without modifying the swaps)

}
\Return {A}
\end{algorithm2e}

Recent work~\cite{shun2015sequential} has shown that this sequential iterative algorithm is readily parallel.
The pseudocode of this parallel algorithm is shown in Algorithm~\ref{alg:parallel-rp}, and is both theoretically and practically efficient.
The key idea is to allow multiple swaps to be performed in parallel as long as the sets of source and destination locations of the swaps are disjoint.
We illustrate the dependence structure on an example in Figure~\ref{fig:rand}.  Given an input array $H$, in Figure~\ref{fig:rand}(a) we create a node for each index, and an edge from the node to the node corresponding to its swap destination.
In this example, we can swap locations 6 and 3, 8 and 2, and 7 and 4 simultaneously in the first step since these three swaps do not interfere with each other.
To resolve the case where multiple nodes point to the same swap destination, we chain these nodes together, as shown
in Figure~\ref{fig:rand}(b). We also remove self-loops.
In Algorithm~\ref{alg:parallel-rp}, each unfinished swap writes to an auxiliary array $R$  using a $\max()$ to reserve both its source and destination locations (Lines~\ref{line:resstart}--\ref{line:resend}).
We assume that $\max()$ takes $O(1)$ work, and in practice, it can be implemented using a compare-and-swap loop~\cite{shun13reducing}.
We then perform the actual swaps in parallel for the swaps that successfully reserve both of its locations (Line~\ref{line:ressuc}).
The rest of the swaps will be packed and will try again in the next step.
Shun et al.\ show that Algorithm~\ref{alg:parallel-rp} finishes in $O(\log n)$ rounds \whp{}~\cite{shun2015sequential}.
The work and \depth{} can be shown to be $O(n)$ in expectation and $O(\log n)$~\whp{}, respectively~\cite{shun2015sequential,blelloch2020optimal}.

\begin{figure}[t]
\begin{center}
  \includegraphics[width=\columnwidth]{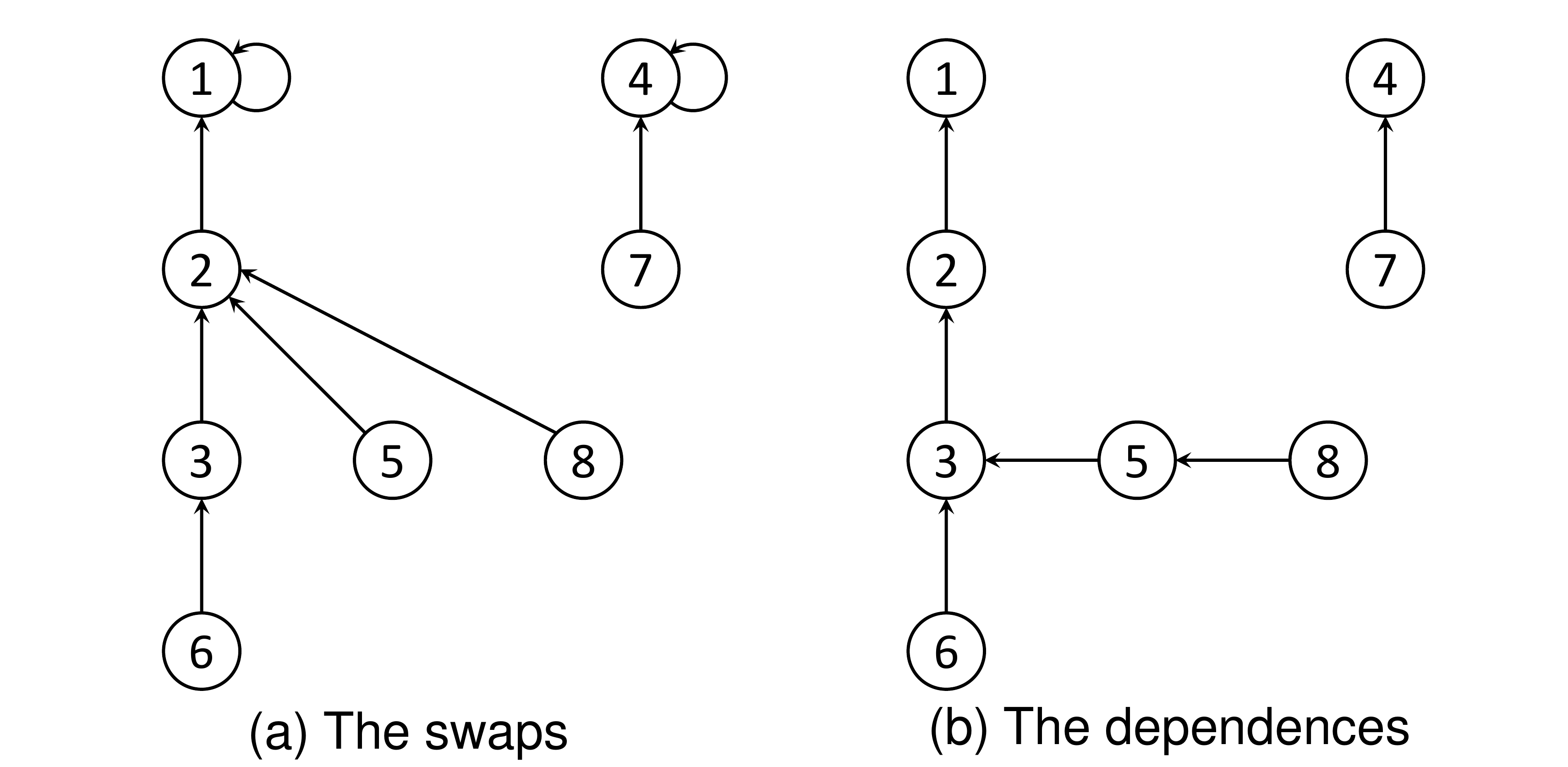}
\end{center}
\caption{An example for $H = [1, 1, 2, 4, 2, 3, 4, 2]$.  (a) indicates the destinations of the swaps according to $H$.  The dependences of the swaps are shown in (b), indicating the order of the swaps.  
}
\label{fig:rand}
\end{figure}

We now show that  the random permutation algorithm above satisfies  the \dproperty{}.
The property for the sequential Knuth shuffle is easy to see---after applying the first $n^{1-\epsilon}$ swaps, which we refer to as one \defn{round}, the problem reduces to a subproblem of size $n-n^{1-\epsilon}$, which can be solved using the same algorithm.
We note that for any $n^{1-\epsilon}$ swaps, up to $3n^{1-\epsilon}$ locations will be accessed in the $R$ and $H$ arrays ($H[s]$, $R[s]$, and $R[H[s]]$ for each swap source $s$).
We can use a parallel hash table to store these values using $O(n^{1-\epsilon})$ space.
When the load factor of the hash table is no more than one-half, then each update or query requires $O(1)$ expected work and $O(\log n)$ work~\whp{}~\cite{knuth1963notes,shun2014phase}.
To guarantee that our algorithm has the same bounds as proved in~\cite{shun2015sequential,blelloch2020optimal}, we always work on the first  $n^{1-\epsilon}$ unfinished swaps based on the sequential order.
The longest dependence length among the first $n^{1-\epsilon}$ swaps in a phase is bounded by $O(\log n)$ \whp{} since it cannot be longer than the overall dependence length for all $n$ swaps, which is bounded by $O(\log n)$ \whp{}.
The overall \depth{} in a phase is $O(\log^2 n)$, where the additional factor of $\log n$ due to hash table insertions and queries.
The entire algorithm finishes after $n^\epsilon$ rounds and is work-efficient.
By applying Theorem~\ref{thm:generating}, we obtain the following theorem.

\begin{theorem}
  There is a \weak{} algorithm for random permutation using $O(n)$ expected work, $O(n^\epsilon\log^2 n)$ \depth{} \whp{}, and $O(n^{1-\epsilon})$ auxiliary space  for $0<\epsilon<1$.
\end{theorem}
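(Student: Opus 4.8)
The plan is to exhibit Algorithm~\ref{alg:parallel-rp} as an instance of the \dproperty{} and invoke Theorem~\ref{thm:generating}. Since the parallel Knuth shuffle has $W(n)=O(n)$ expected work, we have $r=W(n)/n=O(1)$, so it suffices to produce a reduction that shrinks the instance from $n$ to $n-\Theta(n^{1-\epsilon})$ swaps using $O(n^{1-\epsilon})$ work and space and $\polylog(n)$ \depth{}. The reduction processes the $n^{1-\epsilon}$ unfinished swaps with the largest indices---that is, the first $n^{1-\epsilon}$ iterations of Knuth's shuffle in its $i=n,n-1,\ldots$ order---by running exactly the reservation loop of Algorithm~\ref{alg:parallel-rp} restricted to this batch, except that the length-$n$ reservation array $R$ is replaced by a parallel hash table holding only the $O(n^{1-\epsilon})$ entries $R[s]$ and $R[H[s]]$ touched by the batch. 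Sizing the table to $\Theta(n^{1-\epsilon})$ keeps its load factor below $1/2$, so each insertion or query costs $O(1)$ expected work and $O(\log n)$ work \whp{}~\cite{knuth1963notes,shun2014phase}; packing the leftover (unsuccessful) swaps within a batch is an ordinary scan over its $O(n^{1-\epsilon})$ elements and fits in the space budget.

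Next I would argue correctness of the reduction. Because $H[i]\le i$ and the batches are taken in decreasing-index order, the reservation rule of Algorithm~\ref{alg:parallel-rp} (commit a swap only when it owns the max index at both of its locations) serializes the conflicting swaps inside a batch in their sequential order, and a batch's swaps read and write only the array state produced by the already-completed batches. Hence after a batch the array $A$ holds exactly what the sequential Knuth shuffle yields after the corresponding prefix of iterations, and the remaining $n-n^{1-\epsilon}$ swaps form a Knuth shuffle on the resulting $(n-n^{1-\epsilon})$-element array---the same problem at a smaller size. This establishes the \dproperty{} for random permutation.

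For the complexity of a single reduction round, the \depth{} is $O(\log^2 n)$ \whp{}: the longest dependence chain among any subset of the $n$ swaps is no longer than the chain over all of them, which is $O(\log n)$ \whp{} by~\cite{shun2015sequential}, so a batch drains in $O(\log n)$ reservation rounds, each multiplied by an $O(\log n)$ factor for the hash-table operations. The work of one round is $O(n^{1-\epsilon})$ in expectation, since every batch swap touches $O(1)$ table slots at $O(1)$ expected cost and the amortization of~\cite{shun2015sequential}, applied to the batch, bounds the total number of successful-plus-failed reservation attempts by $O(n^{1-\epsilon})$ in expectation. Plugging $D(n)=O(\log^2 n)$ and $r=O(1)$ into Theorem~\ref{thm:generating}, the $n^\epsilon$ rounds give $O(n)$ expected work, $O(n^\epsilon\log^2 n)$ \depth{} \whp{} (taking a union bound over the rounds), and $O(n^{1-\epsilon})$ auxiliary space (the hash table and the pack buffer are reused across rounds), which is the claimed bound.

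The main obstacle is transferring the probabilistic guarantees of~\cite{shun2015sequential}, which are proved for the full $n$-swap instance, to an arbitrary contiguous prefix of swaps: I need that restricting attention to a prefix only removes conflicts, so that both the $O(\log n)$ bound on the dependence-chain length and the $O(m)$-expected-work amortization remain valid for a batch of size $m=n^{1-\epsilon}$. Intuitively this holds because both quantities are monotone in the conflict structure, and a prefix induces a subgraph of the conflict graph, but verifying this monotonicity carefully is the delicate step. A secondary, routine point is the bookkeeping for the \whp{} bounds---each round's \depth{} bound holds with probability $1-n^{-c}$, and summing over $n^\epsilon<n$ rounds still leaves a \whp{} guarantee.
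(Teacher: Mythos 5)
Your proposal is correct and follows essentially the same route as the paper: process the first $n^{1-\epsilon}$ unfinished swaps in sequential order, store the touched entries of $R$ and $H$ in a parallel hash table of size $O(n^{1-\epsilon})$ with load factor at most one-half, bound the per-round \depth{} by $O(\log^2 n)$ using the fact that the dependence length of a prefix is at most that of the full instance ($O(\log n)$ \whp{}), and invoke Theorem~\ref{thm:generating}. The "delicate step" you flag---transferring the \whp{} dependence-length bound to a prefix---is handled in the paper exactly by the monotonicity observation you sketch, so no additional argument is needed.
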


\myparagraph{Constant-dimension linear programming and smallest
  enclosing disks.}  Based on the \weak{} algorithm for random
permutation, it is straightforward to design \weak{} algorithms for
constant-dimension linear programming and smallest enclosing disks
using randomized incremental
construction~\cite{seidel1993backwards,blelloch2016parallelism}.  The
randomized algorithms after randomly permuting the input elements
take $O(d! n)$ expected work and $O(d\log n)$ span and auxiliary space \whp{}, where $d$ is the dimension~\cite{blelloch2016parallelism}, by using the in-place reduce algorithm that will be discussed in \cref{sec:strong}.
By using the \weak{} random permutation algorithm, we can obtain parallel in-place
algorithms for constant-dimension linear programming and smallest
enclosing disks in $O(d!n)$ expected work and  $O(n^\epsilon\log^2 n+d\log n)$ \depth{} \whp{}, using $O(n^{1-\epsilon}+d\log n)$
auxiliary space.

\subsection{List Contraction and Tree Contraction}\label{sec:list}

List ranking~\cite{Reid-Miller93,KarpR90,JaJa92}
is one of the most important problems in the study of parallel
algorithms.  The problem takes as input a set of linked lists, and
returns for each element its position in its list.
List contraction is used to contract a linked list into a single node, and is used as a subroutine in list ranking.

We now discuss the \dproperty{} of list contraction.  The order of
contracting elements does not matter as long as all elements are
eventually contracted.  Therefore, similar to random permutation, we
can process $n^{1-\epsilon}$ elements in a round, and apply
existing parallel list contraction
algorithms~\cite{KarpR90,JaJa92}
to contract these $n^{1-\epsilon}$ elements.
To show an example, we discuss Shun et al.'s non-in-place list contraction algorithm~\cite{shun2015sequential} and how to turn it into a \weak{} algorithm.
This is also the algorithm that we implemented in this paper. 

\begin{figure*}[t]
\vspace{-1em}
\begin{center}
  \includegraphics[width=2\columnwidth]{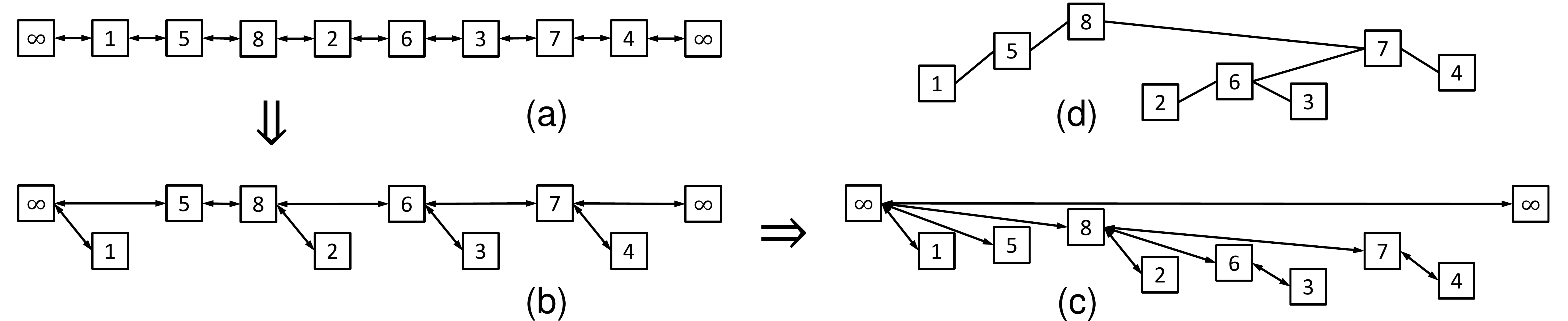}
\end{center}
\caption{A list contraction example for \cref{algo:lc} with the
  priorities shown in the boxes.  (a) shows the original input, with an extra $\infty$ element at each end of the list.  The
  first round of the algorithm contracts the nodes with priorities 1, 2,
  3, and 4, as shown in (b).  After all nodes are contracted, a tree
  structure is formed, as shown in (c).  The dependences of the
  algorithm are shown in (d), where a node depends on all of its descendants
  in the tree.  For instance, the contraction of the node with priority 6 needs to be
  done after the contraction of the nodes with priorities 2 and 3.
}
\label{fig:list}
\end{figure*}

\begin{algorithm2e}[t]
\caption{$\mf{Parallel List-Contraction}(L)$~\cite{shun2015sequential}}
\label{algo:lc}
\SetKwFor{ParForEach}{parallel foreach}{do}{endfch}
\SetKwFor{ParFor}{parallel for}{do}{endfch}
\SetKw{Break}{break}
\KwIn{A doubly-linked list $L$ of size $n$.  Each element $l_i$ has a random priority $p(l_i)$.}
  \DontPrintSemicolon
    $R\gets \{0,\ldots,0\}$\\
\While {elements remaining} {
  \ParForEach {uncontracted element $l_i$} {
    \If {\upshape $p(l_i)<p(\mb{prev}(l_i))$ and $p(l_i)<p(\mb{next}(l_i))$} {
        $R[i]\gets 1$
    }
  }
  \ParForEach {uncontracted element $l_i$} {
       \If {$R[i] = 1$} {
           Splice out element $l_i$ and update pointers
       }
    }
  Pack the leftover (uncontracted) elements
}
\Return {A}\end{algorithm2e}

The pseudocode and the high-level idea of this algorithm is given in Algorithm~\ref{algo:lc}.
A careful implementation of this algorithm takes
worst-case linear work and $O(\log n)$ span~\whp{}~\cite{shun2015sequential,blelloch2020optimal}.
This algorithm assigns a random priority to each list element (\cref{fig:list}(a)), contracts all elements that have priority lower than both of its neighbors' priorities (\cref{fig:list}(b)), packs the leftover elements, and iterates until the list is empty.
The number of rounds of this algorithm is the length of the longest dependence among the nodes, which is $O(\log n)$ \whp{}~\cite{shun2015sequential}.
\cref{fig:list}(d) shows the dependences in the example (a node depends on all of its descendants in the tree shown)---here the algorithm finishes in 4 rounds (the height of the tree).

As discussed, the order of the contraction does not matter.
Hence, for a problem of size $n$, we can work on $n^{1-\epsilon}$ elements
and contract them using this algorithm, which requires
$O(n^{1-\epsilon})$ work, $O(\log n)$
span~\whp{} (no more than the span for $n$ elements),  and $O(n^{1-\epsilon})$  space.  Then the
problem reduces to a subproblem of size $n-n^{1-\epsilon}$.  We can
iteratively apply this for $n^\epsilon$ rounds, which yields a
\weak{} algorithm for list contraction.

After the $n$ elements are spliced out, list contraction algorithm generates a tree, and the tree for the example in \cref{fig:list}(a) is shown in \cref{fig:list}(c). 
The remaining
work in list ranking after list contraction is referred to as
``reconstruction''~\cite{JaJa92}, which distributes the values down the
tree. 
Therefore, once we obtain this tree structure, the classic algorithms~\cite{JaJa92,Reif1993} for reconstruction take worst-case linear work and $O(\log n)$ span \whp{}.
Representing the
tree only requires $n$ pointers, which fit into the $2n$ pointers in
the input linked list if we are allowed to overwrite the input.
For our new \weak{} algorithm, we can store
the $n^{1-\epsilon}$ tree pointers in each round by overwriting the $2n^{1-\epsilon}$ pointers of the elements being processed in the current round.
After we recursively solve the smaller subproblem, we
can use the classic reconstruction algorithm for the $n^{1-\epsilon}$
elements in the current round, which takes worst-case linear  work and logarithmic span \whp{}.
In total, the reconstruction step has the same work, span, and auxiliary space bounds as list contraction.

Tree contraction is  a generalization of list contraction and has many applications in parallel tree and graph
algorithms~\cite{Reid-Miller93,JaJa92,MillerReif1985,shun2015sequential}.
Here we will assume that we are contracting rooted binary trees in which every internal node has exactly two children.
As in list contraction, the ordering of contracted tree nodes does not matter as long as a parent-child pair is not contracted in the same round.
For a problem of size $n$, we can work on $n^{1-\epsilon}$ tree nodes  each round and contract them using existing tree contraction algorithms, and repeat for $n^\epsilon$ rounds.
Therefore, the \dproperty{} is satisfied for tree contraction.
We can convert the parallel tree contraction algorithm of Shun et al.~\cite{shun2015sequential,blelloch2020optimal} that is not in-place, but theoretically and practically efficient, to a \weak{} algorithm that requires $O(n^{1-\epsilon})$ expected work and  $O(\log n)$ \depth{} \whp{}  per round, and $O(n^{1-\epsilon})$ space.

We obtain the following theorem for list contraction and tree contraction.

\begin{theorem}
  There are \weak{} algorithms for list contraction and tree contraction that take $O(n)$ work, $O(n^\epsilon\log n)$ span \whp{}, and $O(n^{1-\epsilon})$ auxiliary space  for $0<\epsilon<1$.
\end{theorem}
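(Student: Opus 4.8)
The plan is to derive both statements from the Decomposable Property (\cref{thm:generating}); most of the work is in checking its hypotheses. For list contraction I would use Shun et al.'s \cref{algo:lc} as the underlying non-in-place algorithm: it contracts a list of size $m$ in $O(m)$ worst-case work and $O(\log m)$ span \whp{}. For tree contraction I would use Shun et al.'s parallel tree-contraction algorithm, which contracts a rooted binary tree of size $m$ (each internal node having two children) in $O(m)$ expected work and $O(\log m)$ span \whp{}. In both cases $W(n)=O(n)$, so $r=W(n)/n=O(1)$ (in expectation for tree contraction), and \cref{thm:generating} will output a \weak{} algorithm with $O(n)$ work, $O(n^\epsilon\log n)$ span, and $O(n^{1-\epsilon})$ auxiliary space the moment I supply a reduction step that takes an instance of size $n$ to one of size $n-n^{1-\epsilon}$ using $O(n^{1-\epsilon})$ work, $O(\log n)$ span \whp{}, and $O(n^{1-\epsilon})$ auxiliary space.

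I would implement the list-contraction reduction step as follows. Designate any $n^{1-\epsilon}$ still-uncontracted elements $M$ --- for concreteness, those in the next $n^{1-\epsilon}$ cells of the element array, which is identifiable in $O(1)$ depth --- compact them into an array of size $n^{1-\epsilon}$, and run \cref{algo:lc} modified so that only elements of $M$ are ever spliced out; equivalently, every element outside $M$ is given priority $+\infty$. Two facts make this work. First, the process always finishes contracting all of $M$: at any time the still-present element of $M$ of smallest priority is a local minimum under the modified priorities, because its $M$-neighbors have larger priority and its non-$M$ neighbors behave as sentinels --- so at least one element of $M$ is removed each round. Second, the modified process is precisely list contraction on the collection of ``runs'' of consecutive $M$-elements (in current list order), each run flanked by sentinels, with i.i.d.\ uniform priorities, on at most $n^{1-\epsilon}$ nodes in total; hence its dependence forest has height $O(\log n)$ \whp{} by the standard list-contraction analysis applied to $\le n^{1-\epsilon}$ elements, and a careful (packing) implementation then costs $O(n^{1-\epsilon})$ work, $O(\log n)$ span \whp{}, and a reservation array plus packed arrays totaling $O(n^{1-\epsilon})$ space. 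Finally the surviving $n-n^{1-\epsilon}$ elements again form a valid instance --- the induced list order does not depend on the priorities, and the surviving priorities are still i.i.d.\ uniform --- so the analysis iterates cleanly, and a union bound over the $O(n^\epsilon)$ rounds preserves the \whp{} span bound. The tree-contraction reduction step is the same idea: designate $n^{1-\epsilon}$ tree nodes, contract exactly those with Shun et al.'s rake/compress rounds, treating non-designated boundary nodes as sentinels (the rules already forbid raking/compressing a node together with its parent), which leaves a smaller rooted binary tree and costs $O(n^{1-\epsilon})$ expected work, $O(\log n)$ span \whp{}, and $O(n^{1-\epsilon})$ space.

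The step I expect to be the main obstacle is the second fact above: since the batch $M$ is picked by array position rather than according to the contraction's dependence structure, I cannot simply assert that contracting $M$ is a sub-computation of the full contraction and inherit its $O(\log n)$-\whp{} span. The remedy is exactly the reinterpretation of the modified process as list contraction on $\le n^{1-\epsilon}$ nodes with fresh i.i.d.\ priorities (so the span analysis only improves as the instance shrinks), together with the $+\infty$-sentinel device that guarantees progress and the observation that the list order is a deterministic function of the input, so an earlier round cannot corrupt the randomness a later round relies on. The remaining ingredients --- charging $O(n^{1-\epsilon})$ work per round via the geometric shrinkage of the active set, the $O(\log n)$ per-round span in the binary-forking model, and the union bound over $O(n^\epsilon)$ rounds to obtain overall $O(n^\epsilon\log n)$ span \whp{} --- are routine, and the same reasoning transfers verbatim to tree contraction, with expected work in place of worst-case work.
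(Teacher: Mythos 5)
Your proposal is correct and follows essentially the same route as the paper: both invoke the \dproperty{} (\cref{thm:generating}) with a reduction step that contracts a batch of $n^{1-\epsilon}$ elements (resp.\ tree nodes) using Shun et al.'s non-in-place algorithm, giving $O(n^{1-\epsilon})$ work, $O(\log n)$ span \whp{}, and $O(n^{1-\epsilon})$ space per round over $n^\epsilon$ rounds. The one place you go beyond the paper is the per-round span bound: the paper simply asserts it is ``no more than the span for $n$ elements,'' whereas you justify it by reinterpreting the batch contraction (with $+\infty$ sentinels) as a fresh list-contraction instance on at most $n^{1-\epsilon}$ nodes --- a legitimate and slightly more careful version of the same argument.
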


\subsection{Merging and Mergesort}\label{sec:merge}

Merging two sorted arrays of size $n$ and $m$ (stored consecutively in an array of size $n+m$) is another canonical primitive in parallel algorithm design. We assume without loss of generality that $n\ge m$.
Parallel in-place merging algorithms have been studied for the PRAM model~\cite{guan1991time}, using $O(n\log n)$ work, $O(n^\epsilon\log n)$ span, and $O(n^{1-\epsilon})$ auxiliary space when mapped to the \weak{} model.
However, this algorithm is quite complicated and unlikely to be practical.
By using the \dproperty{}, we can design a much simpler algorithm based on any existing textbook parallel non-in-place merging algorithm, combined with some features of the sequential in-place merging algorithm~\cite{huang1988practical}.
The key idea in~\cite{huang1988practical} for in-place merging  is to split both input arrays into chunks of size $k$, and sort the chunks based on the last element of each chunk.
Then, the algorithm merges the first remaining chunk from each of the two input arrays, and when one chunk is used up, the algorithm replaces it with the next chunk in the corresponding array.

To obtain a \weak{} algorithm, we set the chunk size to $k=n^{1-\epsilon}$, so that we can process two chunks using $O(n^{1-\epsilon})$ auxiliary space.
With this space bound, we can use a non-in-place merging algorithm to output the smallest $k=n^{1-\epsilon}$ elements and repeat for $O(n^\epsilon)$ rounds.

The first step of our algorithm is the same as~\cite{huang1988practical}, which
 sorts the chunks based on only their last elements, and moves each chunk to their final destination in parallel by using the $O(n^{1-\epsilon})$ auxiliary space as a buffer. Sorting all of the chunks takes $O(n^\epsilon\log n)$ span.
Then, in the merging phase, we move the first chunk from each array to the auxiliary space, use any existing parallel merging algorithm to merge them, until we either run out of the elements in one chunk, at which point we load the next chunk of the corresponding array to the auxiliary space, or until we gather a full chunk of merged elements, at which point we flush it back to the original array and empty the buffer.
At any time, there can be at most three chunks in the auxiliary space---two chunks from the input arrays and one chunk for the merged output, and so the required auxiliary space is $O(n^{1-\epsilon})$.
We can use any existing non-in-place parallel algorithm~\cite{blelloch2018introduction,JaJa92} to perform the merge in the auxiliary space, which takes linear work  and logarithmic \depth{}. 
Such calls to merge in the algorithm can happen at most $2(n+m)/k$ times---$(n+m)/k$ times after loading new chunks to the auxiliary space and $(n+m)/k$ times after the output chunk is full and is flushed.
Each merge takes work linear in the output size, and $O(\log k)=O(\log n)$ span.
The overall work is therefore $O(n)$, and the \depth{} is $O((n/k) \log n)=O(n^\epsilon\log n)$. This gives the following theorem.

\begin{theorem}
 Merging  two sorted arrays of size $n$ and $m$ (where $n\ge m$) stored consecutively in memory takes $O(n)$
 work, $O(n^\epsilon\log n)$ span, and $O(n^{1-\epsilon})$ auxiliary
 space for $0<\epsilon<1$.
\end{theorem}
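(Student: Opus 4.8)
The plan is to give a two-phase algorithm that follows the block-merge idea of Huang and Langston~\cite{huang1988practical}, but replaces its sequential inner merge by a black-box parallel merge, and to instantiate the block size as $k=n^{1-\epsilon}$ so that a constant number of blocks fits in the $O(n^{1-\epsilon})$ heap-allocated buffer. I view the input as array $A$ of size $n$ followed by array $B$ of size $m$ inside a single array of length $n+m$, and partition $A$ and $B$ into contiguous blocks of size $k$ (padding the last block of each conceptually with $+\infty$ sentinels, which is just bookkeeping), giving $t=O((n+m)/k)=O(n^\epsilon)$ blocks. Since $A$ and $B$ are each already sorted, the two block sequences are each already sorted by their last elements.

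\emph{Phase 1 (block rearrangement).} I stably sort the $t$ blocks by their last elements and move each block to the position it will occupy in the final sorted output, using the buffer to hold one block while chasing cycles of the induced block permutation. The $t=O(n^\epsilon)$ keys can be handled with an in-place parallel sort (cf.\ \cref{sec:strong}) in $O(n^\epsilon\log n)$ work, $\polylog(n)$ span, and $O(\log n)$ extra space; applying the block permutation is $t$ block-moves of $O(k)$ work each, performed sequentially along cycles, hence $O(n)$ work and $O(n^\epsilon\log n)$ span overall, within the $O(n^{1-\epsilon})$ space budget.

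\emph{Phase 2 (streaming two-way merge).} I maintain a logical ``frontier'' block from the $A$-stream and from the $B$-stream: repeatedly move these two blocks into the buffer, run a parallel merge~\cite{blelloch2018introduction,JaJa92} on them, and whenever (i) a full output block of $k$ merged elements accumulates, flush it to the next output slot and clear it, or (ii) one of the two resident input blocks is exhausted, load the next block of that stream. At most two input blocks and one output block are resident, so the buffer uses $O(k)=O(n^{1-\epsilon})$ space. The number of merge invocations (equivalently, block loads plus block flushes) is $O((n+m)/k)=O(n^\epsilon)$; each invocation does work linear in its output size and has span $O(\log k)=O(\log n)$, so Phase 2 costs $O(n)$ work and $O(n^\epsilon\log n)$ span, which combined with Phase 1 gives the claimed bounds. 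One can equivalently read each output-block flush as the reduction step of the \dproperty{} (\cref{thm:generating}) with $r=O(1)$, with Phase 1 as a one-time preprocessing that makes each reduction realizable without overwriting still-needed input.

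The main obstacle is the correctness of the block scheme, which is exactly where the sort-by-last-element in Phase 1 is used. I have to establish the invariant that whenever the buffer has accumulated $k$ merged elements ready to flush to output slot $j$, those $k$ elements are precisely the $j$-th group of $k$ in the globally sorted order -- equivalently, that every element still sitting in a block not yet loaded into the buffer is no smaller than every element about to be flushed. This should follow by combining (a) each block's last element being $\le$ every element of every later block in its own stream (each stream is already sorted), with (b) the block chosen next in the combined sorted order having the smallest last element among all not-yet-loaded blocks, so at least $k$ of the elements currently resident are $\le$ that last element while every not-yet-loaded element is $\ge$ it; one then pushes this through slot by slot, left to right. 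Carrying out this invariant rigorously -- together with the edge cases of the sentinel-padded final blocks and of a flush after which neither frontier block is exhausted -- is the only non-routine part; the work, span, and space accounting above is then immediate.
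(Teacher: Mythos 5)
Your algorithm is essentially the paper's proof: the Huang--Langston block scheme with chunk size $k=n^{1-\epsilon}$, a preprocessing pass that sorts blocks by their last elements, and a streaming phase that keeps at most two input blocks and one output block resident in the buffer, invoking a black-box parallel merge $O((n+m)/k)$ times. The Phase~2 accounting ($O(n)$ work, $O(n^\epsilon\log n)$ span, three resident blocks) and the correctness invariant you sketch match what the paper does (and the paper, like you, defers the detailed block-merge invariant to~\cite{huang1988practical}), so none of that is a gap.

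There is, however, one concrete hole: Phase~1 as you describe it does not fit in the space budget when $\epsilon>1/2$. To ``stably sort the $t$ blocks by their last elements and move each block'' by cycle-chasing, you must materialize the block permutation --- equivalently, sort $t$ (tail, block-index) pairs --- and that takes $\Theta(t)=\Theta(n^\epsilon)$ words of auxiliary storage (the tails live in the input array and cannot be sorted in place there, and a genuinely in-place sort of the blocks themselves, treating each block as a unit, would cost $\omega(n)$ work since every block move costs $\Theta(k)$). For $\epsilon\le 1/2$ this is within the $O(n^{1-\epsilon})$ budget and your proof goes through, but for $\epsilon>1/2$ it is not, and the theorem is claimed for all $0<\epsilon<1$. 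The paper handles this case explicitly: rather than sorting all $O(n^\epsilon)$ chunks up front, it sorts only $O(n^{1-\epsilon})$ chunks at a time, using dual binary search (as in~\cite{huang1988practical}) to identify the smallest $O(n^{1-\epsilon})$ remaining chunks to process next, and repeats for $O(n^\epsilon)$ rounds; this keeps the permutation data within the space bound without affecting the work or span. You need this (or an equivalent) patch to cover $\epsilon>1/2$.
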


When $\epsilon>1/2$, the auxiliary space $O(n^{1-\epsilon})$ is insufficient for sorting all $O(n^\epsilon)$ chunks at the beginning, and so we sort $O(n^{1-\epsilon})$ chunks at a time until all chunks have been processed. As done in~\cite{huang1988practical}, we use dual binary search to find the smallest  $O(n^{1-\epsilon})$ chunks to merge, and repeat our above algorithm for $O(n^\epsilon)$ rounds.
This will not affect the cost bounds.

With the \weak{} merging algorithm, we can obtain a \weak{} mergesort algorithm with $O(n\log n)$ work, $O(n^{1-\epsilon})$ auxiliary space, and $O(n^\epsilon\log^2 n)$ span.

\subsection{Filter, Unstable Partition, and Quicksort}\label{sec:strong-filter}

It is easy to see that we can work on a prefix of the filter problem of size $n^{1-\epsilon}$ using linear work and logarithmic \depth{}, and repeat for $n^\epsilon$ rounds.
The only additional work is to move the unfiltered elements to the beginning of the array, which can be done in linear work and $O(\log n)$ span for each prefix.
This gives a \weak{} algorithm for filter that takes $O(n)$ work,  $O(n^\epsilon\log n)$ span, and $O(n^{1-\epsilon})$ auxiliary space.
We can implement partition similarly, and when moving the unfiltered elements to the beginning, we swap the elements so that at the end of the algorithm, the filtered elements are moved to the end of the array.
This algorithm has the same cost as filter, although the partition result is not stable.
With the \weak{} partition algorithm, we can obtain a \weak{} algorithm for (unstable) quicksort that takes $O(n\log n)$ expected work and $O(n^\epsilon\log^2 n)$ span \whp{}, and $O(n^{1-\epsilon})$ auxiliary space.

\section{Strong PIP Algorithms}
\label{sec:strong}

The \strong{} model is restrictive because of the polylogarithmic
auxiliary space requirement.  To date, only a few non-trivial and
work-efficient \strong{} algorithms have been proposed: reducing and
rotating an array, which are trivial, certain fixed
permutations~\cite{berney2018beyond}, and two-way
partitioning~\cite{kuszmaul2020cache}.  In this section, we
review existing \strong{} algorithms for reduce and rotation, and
present new algorithms for scan (prefix sum), filter, merging, and
sorting.

\subsection{Existing Algorithms}\label{sec:existing}

\myparagraph{Reduce.}
The classic divide-and-conquer algorithm for reduce is already \strong{}.
It is implemented by dividing the input array by two equal sized subarrays, recursively solving the two subproblems in parallel, and finally summing together the partial sums from the two subproblems.
This algorithm requires $O(\log n)$ sequential stack space, $O(n)$ work and $O(\log n)$ span, and so it is an optimal \strong{} algorithm.

\myparagraph{Rotating an array.}
Given an array $[a_1, a_2, ... , a_n]$ and an offset~$o$, the output is a rotated array $[a_{o+1}, \ldots , a_n, a_1, \ldots, a_o]$.
This can be implemented by first reversing $[a_1, \ldots, a_o]$, then reversing $[a_{o+1}, \ldots, a_n]$, and finally reversing the entire array. Reversing can be implemented with a parallel loop, which requires $O(\log n)$ stack space when run serially.
 This algorithm requires $O(n)$ work and $O(\log n)$ span, and is therefore an optimal \strong{} algorithm.

\subsection{Scan}\label{sec:scan}

Scan (prefix sum) is probably the most fundamental algorithmic primitive in parallel algorithm design.
Here we assume $\oplus$ is $+$ (addition) for simplicity, but the results in this section also apply to other associative binary operators.
Non-in-place implementations of scan have been designed since the last century, and the work-efficient version is generally referred to as the Blelloch scan~\cite{blelloch1990pre}.
The Blelloch scan contains two phases.
The first phase is referred to as the ``up-sweep'', which partitions the array into two halves, computes the sum recursively for each half,  then uses the  prefix sums for each half to calculate the prefix sums for the entire sequence, and finally stores this result in auxiliary space.
Then the algorithm applies a ``down-sweep'' phase, which propagates the sums from the first phase down to each element recursively---for a subproblem with a prefix sum of $p$ ($p=0$ for the subproblem corresponding to the whole sequence), we recursively solve the left half with prefix sum $p$, and the right half with prefix sum $p$ plus the sum of the left half, in parallel.
This algorithm takes $O(n)$ work and $O(\log n)$ \depth{}, but unfortunately, it requires linear auxiliary space to store all of the partial sums.

\myparagraph{Making existing approaches in-place.}
We first discuss a solution to make the Blelloch scan in-place.
We partition the array into two equal-sized halves, recursively solve each half, and apply a parallel for-loop to add the sum of the left half to every element in the right half.
Directly applying this algorithm leads to $O(n\log n)$ work, since the recursion tree has $\log_2 n$ levels, and on each level we need to perform $O(n)$ additions, which takes $O(n)$ work and $O(\log n)$ span.
We can reduce the work overhead by stopping the recursion when we reach a subproblem of size no more than $\log_2 n$ (these subproblems constitute the base cases), and apply a sequential in-place scan for these subproblems, and store the partial sums in the last elements of the subproblem arrays.
We then run scan on the sums of the $m=O(n/\log n)$ base cases using the aforementioned algorithm.
This scan takes $O(m\log m)=O(n)$ work and computes the prefix sum before the beginning of each base case.
Lastly, we add this prefix sum to the elements in each base case subproblem to obtain the final result for scan.
This algorithm uses $O(n)$ work, $O(\log^2 n)$ \depth{}, and $O(\log (n/\log n))=O(\log n)$ auxiliary space, which is the recursion depth.

Another approach is to use the Brent-Kung adder~\cite{brent1982regular}, which is a circuit to solve the scan problem with $O(\log n)$ \depth{}, $O(n)$ gates, and $O(n\log n)$ area.
We can change the circuit to an algorithm that contains $O(\log n)$ parallel for-loops and each for-loop simulates the gates at one level.
The work of this algorithm is linear, which is the same as the number of gates, and the \depth{} is $O(\log^2 n)$---$O(\log n)$ parallel for-loops each taking $O(\log n)$ span for forking the tasks.
The output of the original circuit is an inclusive scan (i.e., the output is $[a_1, \ldots , (a_1 \oplus a_2 \oplus \ldots \oplus a_{n})]$). The circuit can be modified to compute the exclusive scan in the same bounds.
In conclusion, we can make the the existing approaches in-place, but their span would not be optimal.

\begin{figure*}[t]
\begin{center}
\includegraphics[width=1.75\columnwidth]{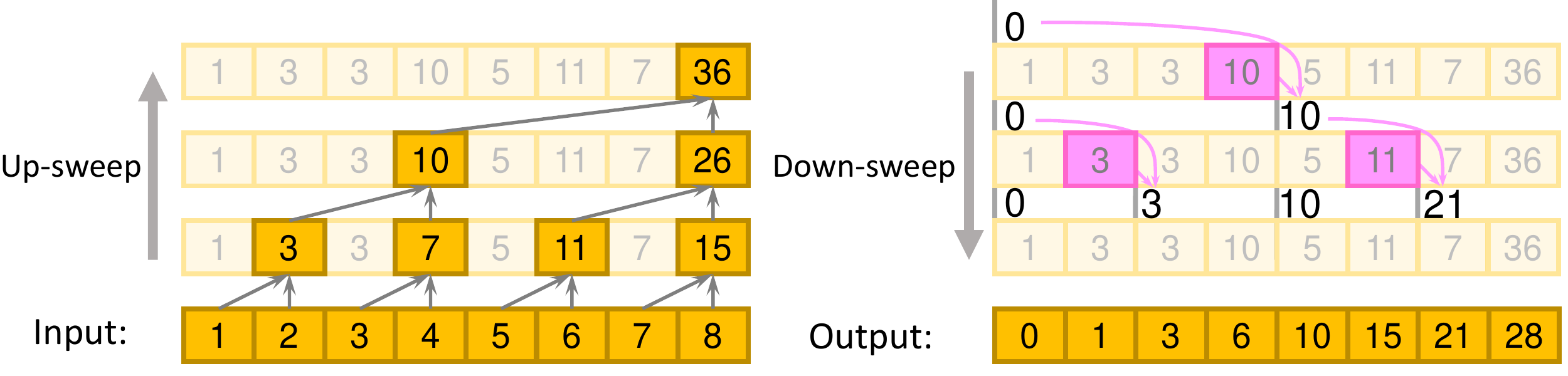}
\caption{Our new \strong{} scan algorithm.  It has an up-sweep phase (left) and a down-sweep phase (right).  Each pair of arrows pointing to the same element indicates an addition.}
\label{fig:scan}
\end{center}
\end{figure*}

\begin{algorithm2e}[!t]
\caption{\textsc{In-Place-Scan}}
\label{alg:pip-scan}
\small 
\SetKwProg{inparallel}{In parallel:}{}{}

\KwIn{An array $A_{1\ldots n}$ of size $n$, assuming $A_0=0$.}
\KwOut{The exclusive prefix-sum array of $A$, and sum $\sigma$.}

\smallskip

\textsc{Up-Sweep}$(A, 1, n)$\\
$\sigma\gets A_n$\\
\textsc{Down-Sweep}$(A, 1, n, 0)$\\
\Return {$(A, \sigma)$}

\smallskip

\Fn{\upshape \textsc{Up-Sweep}$(A, s, t)$} {

\lIf{$s=t$} {\Return}

\inparallel{} {
\textsc{Up-Sweep}$(A, s, \lfloor(s+t)/2\rfloor)$\\
\textsc{Up-Sweep}$(A, \lfloor(s+t)/2\rfloor+1, t)$
}
$A_t\gets A_t+A_{\lfloor(s+t)/2\rfloor}$\\
}
\smallskip

\Fn{\upshape \textsc{Down-Sweep}$(A, s, t, p)$} {

\lIf{$s=t$} {
  $A_s\gets p$, \Return}
$\mb{LeftSum}=A_{\lfloor(s+t)/2\rfloor}$\\
\inparallel{} {
\textsc{Down-Sweep}$(A, s, \lfloor(s+t)/2\rfloor, p)$\\
\textsc{Down-Sweep}$(A, \lfloor(s+t)/2\rfloor+1, t, p+\mb{LeftSum})$
}
}
\end{algorithm2e}

\myparagraph{A new optimal \strong{} algorithm.}  Our new \strong{}
algorithm is almost as simple as the non-in-place Blelloch scan, and
has the same work and span bounds.  The new algorithm as shown in
Algorithm~\ref{alg:pip-scan}, and illustrated in Figure~\ref{fig:scan}.
In the pseudocode, we assume $A_0=0$ when it is referenced, but the algorithm does not actually need to store this.
The new \strong{} algorithm also contains two phases: the
up-sweep and the down-sweep phases, both of which are recursive.  The
key insight in our new algorithm is to maintain all of the
intermediate results in the input array of $n$ elements, and use stack
space in the down-sweep phase to pass down the partial sums.  For each
recursive subproblem corresponding to a subarray from index $s$ to $t$,
we partition it into two halves, $s$ to $k$ and $k+1$ to $t$, where
$k=\lfloor (s+t)/2\rfloor$.  In the up-sweep phase, we first recursively solve the two subproblems,
and then add the value at index $k$ to the value at index $t$.
These additions are shown
as arrows on the left side of Figure~\ref{fig:scan}.
In the down-sweep phase, we keep the prefix sum $p$ of each subproblem.
Similar to the Blelloch scan, we compute the prefix sum of the right subproblem by adding the sum of the left subproblem to the
current prefix sum.
In our algorithm, the sum of the left subproblem is stored at $A_{\lfloor(s+t)/2\rfloor}$.
Both recursions stop when $i=j$, and at the end of the down-sweep, we obtain the exclusive scan result.
The down-sweep process and its output on an example are shown on the right side of \cref{fig:scan}.

\myparagraph{Correctness and efficiency.}
The correctness and efficiency of this algorithm is based on the following observation.
In the down-sweep phase, the value of $A_t$ in any recursive call is not being used (except for the root where $A_n$ is the total sum).
Hence, in our algorithm, we reuse the space for $A_t$ to store the sum for the next level. The reduction tree (left side of \cref{fig:scan}) has $2n-1$ nodes: $n$ nodes for the input and $n-1$ internal nodes storing the partial sums.
We note that in the down-sweep phase, only the sums of the left subproblems are used, and there are $n-1$ of them.
They are stored in $A_{1,\ldots, n-1}$ by the end of up-sweep, while $A_n$ stores the total sum.
With all of these values, we can run the down-sweep phase in the same way as in the Blelloch scan.
The partial sums stored in $A_{1,\ldots, n-1}$ are passed to the output by the argument  $p$ in the down-sweep function call, which is stored in the stack space.
Hence, the new \strong{} scan algorithm uses $O(n)$ work, $O(\log n)$ \depth{}, and $O(\log n)$ sequential auxiliary stack space, and is therefore an optimal \strong{} algorithm.

\begin{theorem}
The new \strong{} scan algorithm is optimal, using $O(n)$ work, $O(\log n)$ \depth{}, and $O(\log n)$ sequential auxiliary space.
\end{theorem}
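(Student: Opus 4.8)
The plan is to establish the theorem in four parts --- correctness, work, \depth{}, and auxiliary space --- with correctness being the substantive part. Throughout, let $a_i$ denote the value stored at $A_i$ in the input, and let $k=\lfloor(s+t)/2\rfloor$ denote the split point of the subproblem $[s,t]$. First I would observe that \textsc{In-Place-Scan} (Algorithm~\ref{alg:pip-scan}) is determinate: in both \textsc{Up-Sweep} and \textsc{Down-Sweep} the two parallel recursive calls operate on the disjoint index ranges $[s,k]$ and $[k+1,t]$; the single write $A_t \gets A_t + A_k$ in \textsc{Up-Sweep} is sequenced after both recursive calls; and the read $\textit{LeftSum}\gets A_k$ in \textsc{Down-Sweep} is sequenced before its two recursive calls. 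Hence no cell is concurrently read and written, so the fork-join cost accounting is well defined and the final array contents are independent of the scheduler.

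Next I would prove the up-sweep invariant by structural induction on $t-s$: \emph{after \textsc{Up-Sweep}$(A,s,t)$ returns, for every index $i\in\{s,\dots,t\}$, $A_i$ equals the sum of $a_j$ over the largest recursion subproblem whose right endpoint is $i$.} The base case $s=t$ is immediate. In the inductive step, the two recursive calls establish the claim for every index $i<t$ (no subproblem of the left subtree ends at $t$, and for $i<t$ the maximal subproblem ending at $i$ lies entirely inside one of the two child subtrees, so its stored value survives), while the final line turns $A_t=\sum_{j=k+1}^{t}a_j$ into $\sum_{j=k+1}^{t}a_j+\sum_{j=s}^{k}a_j=\sum_{j=s}^{t}a_j$, which is precisely the sum over $[s,t]$, the maximal subproblem ending at $t$. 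The single consequence I need downstream is this: since \textsc{Up-Sweep} and \textsc{Down-Sweep} unfold the same recursion tree, whenever \textsc{Down-Sweep}$(A,s,t,\cdot)$ runs, $[s,k]$ is the maximal recursion subproblem ending at $k$ --- any strictly larger one would be an ancestor of $[s,k]$, but every proper ancestor of $[s,k]$ ends at an index $\ne k$ --- so $A_k=\sum_{j=s}^{k}a_j$ at that moment, provided $A_k$ has not yet been overwritten. That proviso holds because \textsc{Down-Sweep} writes to $A$ only at base cases $s=t$, which occur at leaves (descendants, never ancestors, of the current call), and within the current call the read $\textit{LeftSum}\gets A_k$ precedes the recursive calls that could touch $A_k$.

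Then I would prove the down-sweep invariant, also by induction on $t-s$: \emph{if $p=\sum_{j<s}a_j$, then after \textsc{Down-Sweep}$(A,s,t,p)$ every $A_i$ with $i\in\{s,\dots,t\}$ holds $\sum_{j<i}a_j$.} The base case $s=t$ sets $A_s\gets p=\sum_{j<s}a_j$. In the inductive step, $\textit{LeftSum}=A_k=\sum_{j=s}^{k}a_j$ by the lemma above, so the left call runs with prefix $p=\sum_{j<s}a_j$ and the right call with prefix $p+\textit{LeftSum}=\sum_{j\le k}a_j=\sum_{j<k+1}a_j$; the two inductive conclusions cover $\{s,\dots,k\}$ and $\{k+1,\dots,t\}$ and together give the claim. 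Applying this to the top-level call \textsc{Down-Sweep}$(A,1,n,0)$, with $0=\sum_{j<1}a_j$, shows $A_i=\sum_{j<i}a_j$ for all $i$, i.e., $A$ holds the exclusive scan; and $\sigma=A_n$, recorded between the sweeps, equals $\sum_{j=1}^{n}a_j$ by the up-sweep invariant. This also vindicates the informal remark accompanying Figure~\ref{fig:scan} that in the down-sweep no call ever reads its own $A_t$ (only the midpoint $A_k$), which is why the up-sweep may freely overwrite those cells with partial sums.

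Finally, the cost bounds are routine: both sweeps satisfy $W(n)=2W(n/2)+O(1)=O(n)$ and, in the binary-forking model, $D(n)=D(n/2)+O(1)=O(\log n)$, and the $\sigma\gets A_n$ step is $O(1)$; so \textsc{In-Place-Scan} has $O(n)$ work and $O(\log n)$ \depth{}. It allocates no heap memory, and run sequentially the recursion nests to depth $O(\log n)$ with $O(1)$ words per stack frame, so its sequential auxiliary space is $O(\log n)$ words of stack, making it a \strong{} algorithm; since these bounds coincide with those of the non-in-place Blelloch scan, it is optimal. I expect the only real obstacle to be the bookkeeping at the heart of the correctness argument --- tracking which cell of the single array carries which partial sum across the two sweeps, and in particular certifying that every value consumed by \textsc{Down-Sweep} is read before the in-place updates clobber it. The index-collision behavior along the rightmost path of each subtree is the one place where a cell ends up holding a "wrong-looking" value; the maximal-subproblem phrasing of the up-sweep invariant is designed precisely so that this is a feature rather than a bug.
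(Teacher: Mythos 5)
Your proof is correct and follows essentially the same route as the paper: the paper's argument rests on the same observation that the down-sweep never consumes its own $A_t$ (so the up-sweep may store the $n-1$ left-subproblem sums in $A_{1,\ldots,n-1}$ with the total in $A_n$, and the prefix $p$ travels through the stack), together with the standard $W(n)=2W(n/2)+O(1)$ and $D(n)=D(n/2)+O(1)$ recurrences and $O(\log n)$ recursion depth. You simply carry out with explicit invariants and a read-before-clobber check what the paper states informally, which is a welcome tightening but not a different approach.
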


\subsection{Other Strong PIP Algorithms}\label{sec:other-strong}

\myparagraph{Filter, unstable partition, and quicksort.}
Consider a $k$-way divide-and-conquer algorithm for filter, where we partition the array into $k$ chunks of equal size, filter each chunk, and pack the unfiltered results together.
For one level of recursion, this takes linear work and $O(k\log n)$ span if chunks are processed one at a time, but within each chunk we move the elements in parallel.
This algorithm only requires a constant amount of extra space to store pointers.
The number of levels of recursion is  $O(\log_k n)$, and so the overall work is $O(n\log n/\log k)$, and the overall span is  $O((k/\log k)\log^2n)$.
Similar to Section~\ref{sec:strong-filter}, we can use this filter algorithm to implement an unstable partition algorithm, with the same cost bounds.
In theory, we can plug in any constant for $k$, which gives a \strong{} algorithm with $O(\log^2 n)$ \depth{} and $O(\log n)$ auxiliary space, although it is not work-efficient.
Alternatively, we can achieve work-efficiency by setting $k=n^\epsilon$. This does not achieve polylogarithmic \depth{}, but has good performance in practice.
We implement this filter algorithm and present experimental results in \cref{sec:exp}.

We can obtain an unstable quicksort algorithm that applies the partition algorithm for $O(\log n)$ levels of recursion \whp{}.
We note that Kuszmaul and Westover~\cite{kuszmaul2020cache} recently developed a work-efficient \strong{} algorithm for partition, which gives a work-efficient and polylogarithmic-span quicksort algorithm.

\myparagraph{Merging and mergesort.}
We again consider merging two sorted arrays of size $n$ and $m$, which are stored consecutively in an array of size $n+m$.
Again, we can use a two-way divide-and-conquer approach, where we use a dual binary search to find the median among all $n+m$ elements, and in parallel swap the out-of-place elements in two arrays.
This swap can be implemented by the \strong{} algorithm for array rotation, discussed in Section~\ref{sec:existing}.
Then, we recursively run merging on the two subproblems, each of size $(m+n)/2$.
The subproblem size shrinks by a factor of 2 on each level of
recursion, and so the recursion depth is bounded by $\log_2 (n+m)$.
The work to swap the elements at each level is $O(n+m)$, and so the
overall work is $O((n+m)\log (n+m))$.  The span and auxiliary space is
$O(\log (n+m))$, which is proportional to the recursion depth.  This gives a \strong{} algorithm for merging. A
\strong{} mergesort algorithm can be obtained by plugging in this merging algorithm, although it is not work-efficient.

\myparagraph{Set operations.}
We now consider computing the union, intersection, and difference of two ordered sets of size $n$ and $m\leq n$.
If the two sets are given in a binary tree format, then existing algorithms for these operations~\cite{BlellochFS16,pam} are already \strong{}, work-optimal ($O(m\log (n/m+1))$ work), and have $O(\log^2 n)$ span. 
We now describe how to implement these operations if the sets are given in arrays stored contiguously in memory.
For union, we can first use the merging algorithm described above, and then the filter algorithm described above to remove duplicates. Therefore, computing the union on arrays is \strong{}.
For intersection and difference, we can
run binary searches to find each element in the smaller set inside the larger set, and then apply the filter algorithm described above to obtain the output, which takes $O(n\log n)$ work.
The resulting algorithms are not work-efficient, since our  \strong{} merging and filter are not work-efficient.

\section{Relaxed PIP Graph Algorithms}
\label{sec:relax}

In this section, we introduce new \weak{} algorithms for graph
connectivity, biconnectivity, and minimum spanning forest.

\myparagraph{Connectivity and Biconnectivity.}
The standard output size for graph connectivity and biconnectivity is $O(n)$ and $O(m)$, respectively.
Recent work by Ben-David et al.~\cite{bendavid2017implicit} introduces a compressed scheme for storing graph connectivity information.
For any $1\le k\le n$, it requires $O(k\log n+m/k)$ output size with an $O(k)$ expected query work for connectivity and $O(k^2)$ expected query work for biconnectivity.
Constructing such a compressed (bi)connectivity oracle takes $O(km)$ expected work and $O(k^{3/2}\log^3n)$ span \whp{}.
By setting $k=m^\epsilon$, we have have the following theorem.
\begin{theorem}\label{thm:biconnectivity}
  A (bi)connectivity oracle can be constructed using $O(m^{1-\epsilon})$ auxiliary space, $O(m^{1+\epsilon})$ expected work, and $O(m^{3\epsilon/2}\log^3n)$ span \whp{} for $1/2<\epsilon<1$.
  A connectivity query can be answered in $O(m^\epsilon)$ expected work, and a biconnectivity query can be answered in $O(m^{2\epsilon})$ expected work.
\end{theorem}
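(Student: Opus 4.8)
The plan is to obtain \cref{thm:biconnectivity} by a direct instantiation of the compressed (bi)connectivity oracle of Ben-David et al.~\cite{bendavid2017implicit} with the parameter choice $k = m^\epsilon$, and then to audit every resulting quantity — construction work, span, query work, output size, and heap-allocated auxiliary space. First I would recall the guarantees already stated: for any $1 \le k \le n$ the oracle has output size $O(k\log n + m/k)$, answers a connectivity query in $O(k)$ expected work and a biconnectivity query in $O(k^2)$ expected work, and is built in $O(km)$ expected work and $O(k^{3/2}\log^3 n)$ span \whp{}. Substituting $k = m^\epsilon$ immediately gives $O(m^{1+\epsilon})$ expected construction work, $O(m^{3\epsilon/2}\log^3 n)$ span \whp{}, and query work $O(m^\epsilon)$ and $O(m^{2\epsilon})$ for connectivity and biconnectivity respectively. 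A routine side condition is that $k = m^\epsilon$ lie in the admissible range, i.e.\ $m^\epsilon \le n$; I would simply record this hypothesis (it holds for the sparse graphs that are the typical setting, and more generally whenever $\epsilon$ is small enough relative to the edge density).

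The delicate part is the auxiliary-space claim of $O(m^{1-\epsilon})$, since an $O(km)$-work construction could a priori materialize far more than $m^{1-\epsilon}$ words of intermediate state. I would argue this in two pieces. First, the \emph{output} oracle has size $O(k\log n + m/k) = O(m^{\epsilon}\log n + m^{1-\epsilon})$ words, which for $1/2 < \epsilon < 1$ is $o(m)$; hence it can be written in place over the $\Theta(m)$-word representation of the input graph (the PIP convention that the output may overwrite the input), so it does not charge against the heap-allocated auxiliary budget. Second — and this is the step requiring genuine care — the construction of~\cite{bendavid2017implicit} must be realized so that its transient working memory is $O(m^{1-\epsilon})$ words: the natural route is to feed it the edge set in batches of $\Theta(m/k) = \Theta(m^{1-\epsilon})$ edges at a time and to implement each of its internal subroutines (connected-components and spanning-forest computations, the $O(k)$-level clustering and low/high bookkeeping, etc.) so that each touches only $O(m^{1-\epsilon})$ additional words, in the spirit of the \dproperty{} of \cref{thm:generating}. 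The restriction $\epsilon > 1/2$ is exactly what makes $m/k = m^{1-\epsilon}$ simultaneously sublinear and no larger than $k = m^\epsilon$, so that a single batch plus the per-cluster data both fit in the budget.

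I expect this last point — certifying that Ben-David et al.'s \emph{construction}, not merely its output, fits in $O(m^{1-\epsilon})$ auxiliary space — to be the main obstacle, because it forces reopening their algorithm rather than invoking it as a black box: one must check that no intermediate object (an explicit enumeration of all $\Theta(km)$ query operations, or a full uncompressed spanning structure) is ever stored at once, and that the level-by-level recursion can be evaluated with space proportional to one level. If some subroutine resists an $O(m^{1-\epsilon})$-space implementation, the fallback is to recompute the needed information on demand, trading time for space while staying within $\tilde{O}(m^{1+\epsilon})$ work. Everything else in the theorem is then pure bookkeeping once the parameter substitution and this space audit are done.
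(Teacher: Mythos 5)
Your proposal is correct and follows essentially the same route as the paper: the theorem is obtained by directly instantiating the compressed (bi)connectivity oracle of Ben-David et al.\ with $k=m^\epsilon$ and reading off the work, span, and query bounds. The paper does not perform the separate space audit or batching argument you sketch for the construction phase---it treats the $O(m/k)$ auxiliary-space bound as part of the guarantees inherited from the implicit-decomposition construction itself (which by design never materializes more than the compressed representation), so your extra caution is reasonable but not part of the paper's one-paragraph derivation.
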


The high-level idea in the algorithms is to select a subset of the vertices as the ``centers'' and only keep information for these center vertices. Each vertex has a   $1/k$ probability of being selected as a center.
This is referred to as the \defn{implicit decomposition} of the graph.
For a query to a non-center vertex $v$, we apply a breadth-first search from $v$ to the first center $c$, which takes $O(k)$ expected work~\cite{bendavid2017implicit}.
For connectivity, $v$'s label is the same as $c$'s label.
It is also possible that a search does not reach any center, but Ben-David et al.~\cite{bendavid2017implicit} show that the expected size of a connected component without a center vertex is small ($O(k)$ in expectation), and so the cost to traverse all vertices in such a component is also $O(k)$ in expectation.
For biconnectivity, an additional step of local analysis is required to obtain the output for $v$ from $c$, which requires $O(k^2)$ expected work.

Theorem~\ref{thm:biconnectivity} gives algorithms that are almost \weak{}, other than having an extra factor of $O(m^{\epsilon/2})$ in the product of the space and span bounds. 
Alternatively, we can obtain new \weak{} connectivity and biconnectivity algorithms by using the minimum spanning forest algorithm that will be discussed next, at a cost of additional work.

\myparagraph{Minimum Spanning Forest.}  The idea of implicit
decomposition can be extended to the minimum spanning forest (MSF)
problem.  For simplicity, we assume that the graph is connected, but
disconnected graphs can also be handled using an approach described by
Ben-David et al.~\cite{bendavid2017implicit}.

We note that the MSF is unique
for a graph (assuming that ties are broken consistently).  Therefore,
for a query to vertex $v$, instead of using a breadth-first search
on \emph{all} edges to find the center in connectivity, we need to search out to a
center using \emph{only} the MSF edges.  This can be achieved by
using a Prim-like search algorithm from $v$.
This increases the work by a factor of $O(\log k)$ to compute the
implicit decomposition of the graph and for the query cost (the queue will contains $O(k)$
vertices on average for each search).

We can generate an implicit decomposition of the graph using a similar approach as for connectivity and biconnectivity.
We then compute the MSF
across the  $m/k$ centers of the decomposition.
The output size of this spanning forest is $O(m/k)$.  To
compute the MSF in parallel, we can use
\Boruvka{} algorithm.
We start with every cluster being in its own component, and enumerate all edges for $O(\log n)$ rounds until
the entire graph is connected.  On each round, we run \Boruvka{}
algorithm to find the minimum outgoing edges from each component.
This takes $O(mk\log k)$ work---we check all $m$ edges in a \Boruvka{} round and each edge takes $O(k\log k)$ work to find the clusters of both of its endpoints.
Similar to connectivity, each vertex only uses MST edges to reach the centers, and so the algorithm based on implicit decomposition is correct.
By setting $k=m^\epsilon$, the cost for each round is $O(m^{1+\epsilon}\log n)$ work and $O(k\log k+\log n)=O(\log n+m^\epsilon\log m)$ span.
Since there are $O(\log n)$ rounds, we obtain the following theorem.

\begin{theorem}\label{thm:msf}
  Given a graph with $n$ vertices and $m$ edges, a data structure for minimum spanning forest can be computed in $O(m^{1+\epsilon}\log^2 n)$ expected work, $O(m^\epsilon\polylog(n))$ span \whp{}, and $O(m^{1-\epsilon})$ auxiliary space.
  Querying if an edge is in the MSF takes $O(m^{1-\epsilon}\log n)$ work.
\end{theorem}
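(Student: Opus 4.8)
The plan is to carefully assemble the cost bounds from the three ingredients already set up in the discussion preceding the theorem: (i) building the implicit decomposition with Prim-like (MSF-only) searches, (ii) contracting the $m/k$ centers, and (iii) running parallel \Boruvka{}-style contraction on the contracted graph for $O(\log n)$ rounds, with the final substitution $k=m^\epsilon$. First I would fix notation: sample each vertex independently with probability $1/k$ to obtain the set of centers, so that in expectation there are $O(m/k)$ centers (using $n\le m+1$ for a connected graph). I would invoke the Ben-David et al.\ analysis~\cite{bendavid2017implicit} to argue that a Prim-like search from any vertex along MSF edges reaches a center after visiting $O(k)$ vertices in expectation, and that the components containing no center have expected size $O(k)$; the only change from their connectivity bound is the priority-queue overhead, which multiplies the per-search work by $O(\log k)$. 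This yields the implicit decomposition in $O(mk\log k)$ expected work and, since each search is essentially sequential over its $O(k)$ frontier but the searches across vertices are independent, $O(k\log k)$ span \whp{} for this phase (the polylog slack absorbs the high-probability bound on the maximum search length).

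Next I would analyze the \Boruvka{} phase on the contracted multigraph over the $m/k$ centers. Each of the $O(\log n)$ rounds scans all $m$ original edges; for each edge we must identify the clusters of its two endpoints, which costs $O(k\log k)$ by the Prim-like search above, so a round costs $O(mk\log k)$ expected work and $O(k\log k+\log n)$ span (the $\log n$ term for the standard parallel minimum-finding and pointer-jumping / relabeling within a \Boruvka{} round). Multiplying by $O(\log n)$ rounds gives $O(mk\log k\log n)$ expected work and $O((k\log k+\log n)\log n)$ span overall for this phase. Adding the decomposition phase and substituting $k=m^\epsilon$ (so $\log k=\Theta(\log m)$), the dominant work term is $O(m^{1+\epsilon}\log m\log n)=O(m^{1+\epsilon}\log^2 n)$ expected, the span is $O(m^\epsilon\log m\log n+\log^2 n)=O(m^\epsilon\polylog(n))$ \whp{}, and the auxiliary space is $O(m/k)=O(m^{1-\epsilon})$ for storing the contracted graph and the center labels. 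For the query bound, a query on an edge $(u,v)$ performs two Prim-like searches to the respective centers ($O(m^{1-\epsilon}\log n)$ work each since a search touches $O(k)=O(m^\epsilon)$ vertices and queue operations cost $O(\log k)$; wait---here I would double-check: the bound claimed is $O(m^{1-\epsilon}\log n)$, which corresponds to searches of size $O(m^{1-\epsilon})$, matching the $m/k$ centers rather than $k$, so I would reconcile this by noting the query must also consult the stored MSF over centers, whose size is $O(m^{1-\epsilon})$) plus a lookup in the stored center-MSF, totaling $O(m^{1-\epsilon}\log n)$.

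I would also need to handle a few loose ends: the high-probability statements require the standard Chernoff-type concentration on the number of centers and on the maximum Prim-search length, which I would cite from~\cite{bendavid2017implicit} rather than reprove; the case where $1/2<\epsilon<1$ is implicitly needed for the space bound $O(m^{1-\epsilon})$ to dominate the per-processor stack overhead, matching the regime in Theorem~\ref{thm:biconnectivity}; and I should note that a \weak{} algorithm formally requires $O(m^\epsilon\polylog m)$ span \emph{for all} $\epsilon$, whereas here the construction is parametrized by a single $\epsilon$, so the statement is ``\weak{}-like'' in the same sense as Theorem~\ref{thm:biconnectivity}. The main obstacle I anticipate is not any single calculation but rather the bookkeeping of \emph{which} quantity, $k$ or $m/k$, governs each sub-cost---the decomposition searches scale with $k$, the contracted-graph size scales with $m/k$, and the query cost mixes both---so the careful part is making sure every $O(k)$ versus $O(m/k)$ is attributed correctly before substituting $k=m^\epsilon$, especially in the query bound where the stated $O(m^{1-\epsilon}\log n)$ forces the $m/k$ term to dominate.
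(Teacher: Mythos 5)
Your proposal follows essentially the same route as the paper: build the implicit decomposition with Prim-like searches over MSF edges (paying the $O(\log k)$ priority-queue factor), run \Boruvka{} over the $m/k$ centers for $O(\log n)$ rounds at $O(mk\log k)$ work and $O(k\log k+\log n)$ span per round, and substitute $k=m^\epsilon$; your accounting of which costs scale with $k$ versus $m/k$ matches the paper's. Your observation that the stated query bound $O(m^{1-\epsilon}\log n)$ must come from consulting the stored center-MSF of size $O(m/k)$ rather than from the $O(k\log k)$ search is a fair reading, as the paper itself leaves the query cost unjustified.
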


This MSF algorithm is \weak{}.  
Once we have the implicit spanning forest (the minimum-weight one in this case), we can use the same approaches mentioned above to get \weak{} algorithms
for connectivity and biconnectivity.
Compared to Theorem~\ref{thm:biconnectivity}, the MSF-based algorithms require a factor of $O(\log^2 n)$ more work, but have lower span.

\myparagraph{Related Work.}
Many researchers have studied the
time-space tradeoff for the $s$-$t$ connectivity problem, and the
results lead to in-place algorithms for the
problem~\cite{broder1994trading,feige1997spectrum,beame1998time,kosowski2013faster,edmonds1998time}.
However, the algorithms are based on random walks and are inherently
sequential.  Recent work by Chakraborty et
al.~\cite{Chakraborty2018,Chakraborty2019,Chakraborty2020} has studied
in-place algorithms for other graph problems, including graph search
and connectivity, and it would be interesting to parallelize these
algorithms in the future.

\section{Implementations and Experiments}\label{sec:exp}

In the previous sections, we have designed parallel in-place algorithms
with strong theoretical guarantees.  Many of these algorithms are
relatively simple, and
in this section we describe how to implement these algorithms
efficiently so that they can outperform or at least be competitive
with their non-in-place counterparts, while using less space.  We present implementations for
five algorithms: scan, filter, random permutation, list contraction,
and tree contraction.  The implementations for the first two are
fairly simple, and the last three are based on the deterministic
reservations framework of Blelloch et
al.~\cite{BlellochFinemanGibbonsEtAl2012}.

\subsection{Experimental Setup}

We run all of our experiments on a 72-core Dell PowerEdge R930 (with
two-way hyper-threading) with 4$\times$2.4GHz Intel 18-core E7-8867 v4
Xeon processors (with a 4800MHz bus and 45MB L3 cache) and 1TB of main
memory.  We compile the code using the \texttt{g++} compiler (version
5.4.1) with the \texttt{-O3} flag, and use Cilk Plus for parallelism.

We compare our PIP algorithms to the non-in-place versions in
the Problem Based Benchmark Suite (PBBS)~\cite{shun2012brief}, which is a collection of highly-optimized
parallel algorithms and implementations and widely used in
benchmarking.  The implementations of random permutation, list
contraction, and tree contraction in PBBS are
from~\cite{shun2015sequential}.

\begin{figure*}[h]
\begin{minipage}[t]{0.24\textwidth}
\begin{center}
  Scan\\
  \includegraphics[width=\columnwidth]{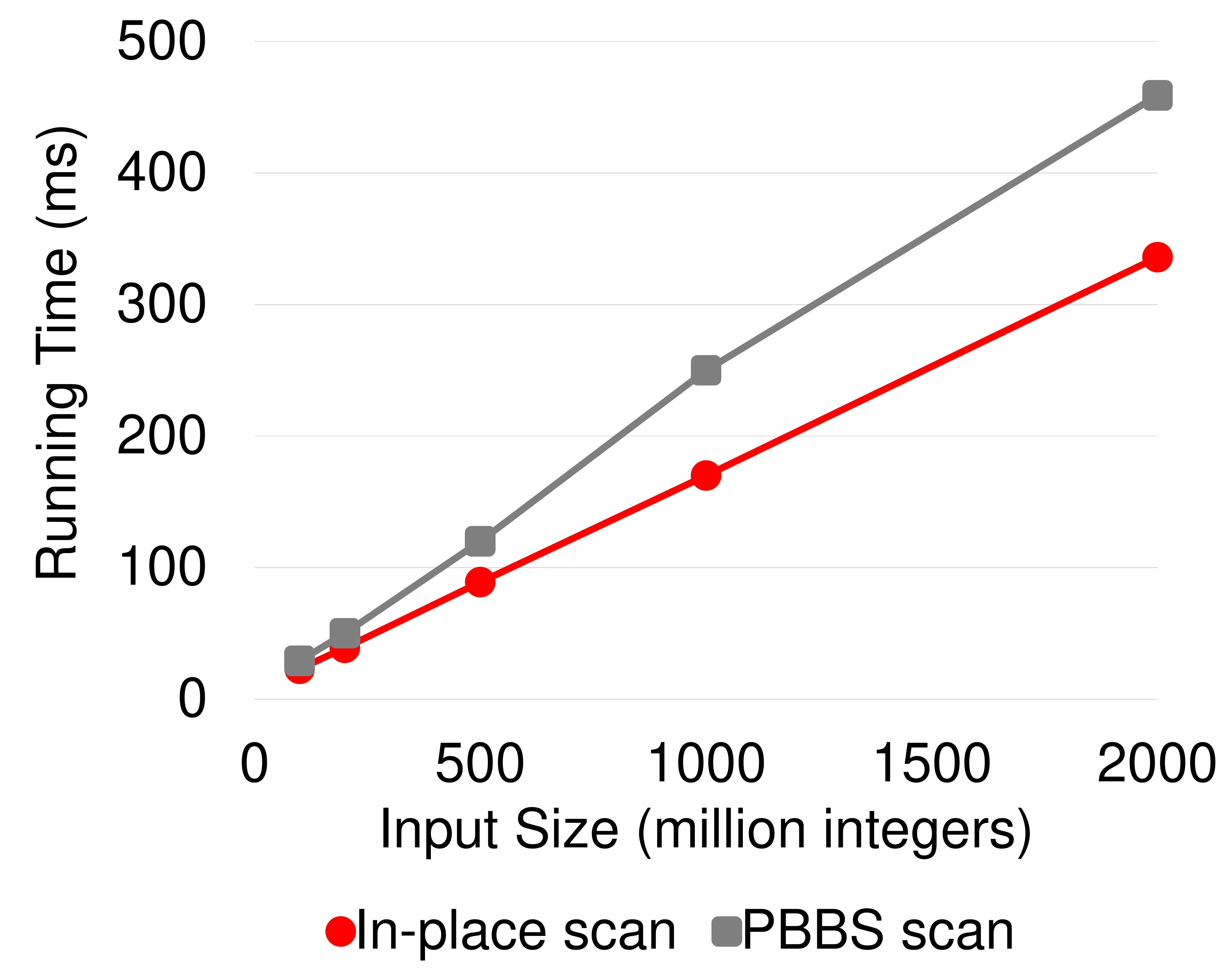}
  \includegraphics[width=\columnwidth]{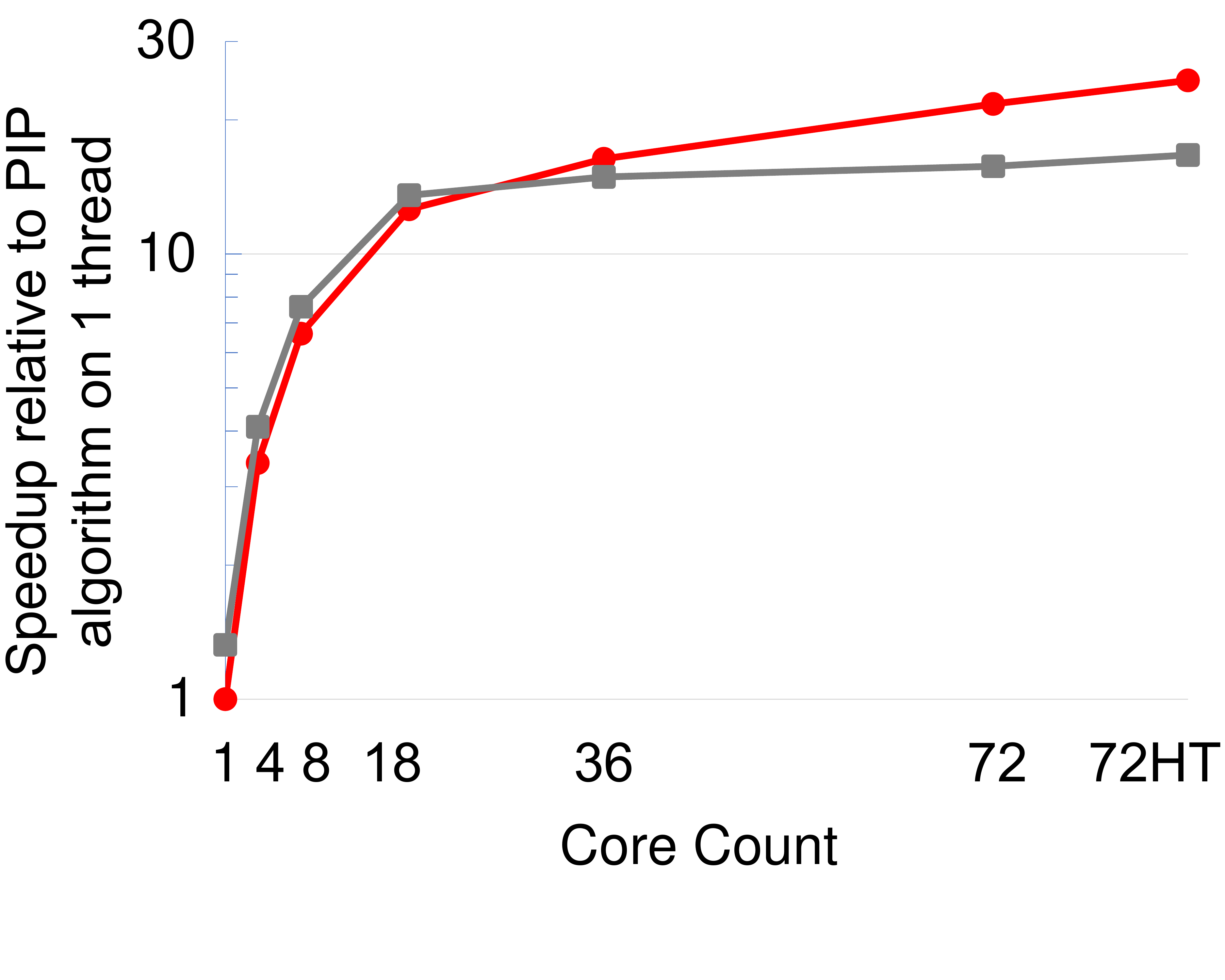}
\end{center}
\end{minipage}
\begin{minipage}[t]{0.24\textwidth}
\begin{center}
  Filter\\
  \includegraphics[width=\columnwidth]{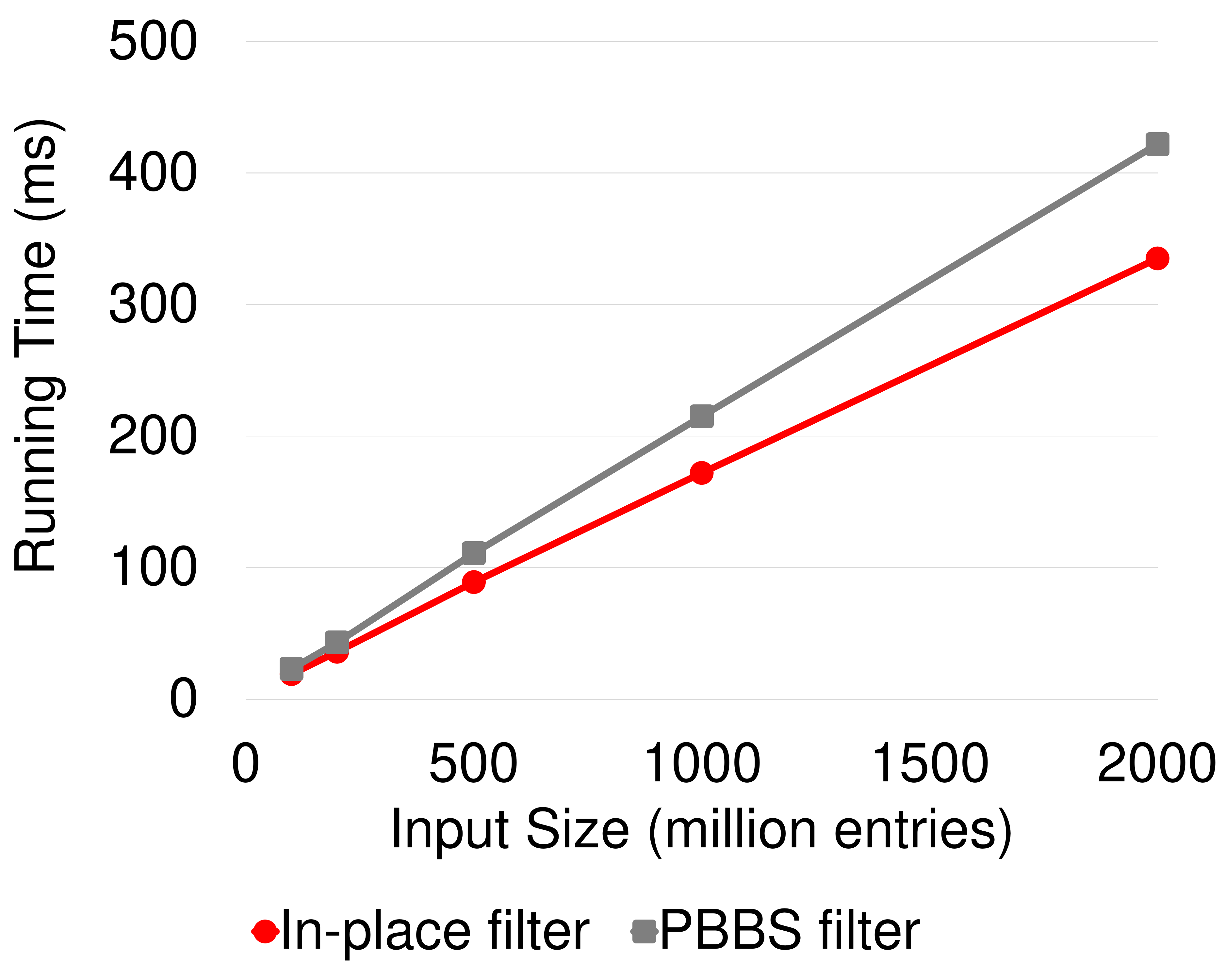}
  \includegraphics[width=\columnwidth]{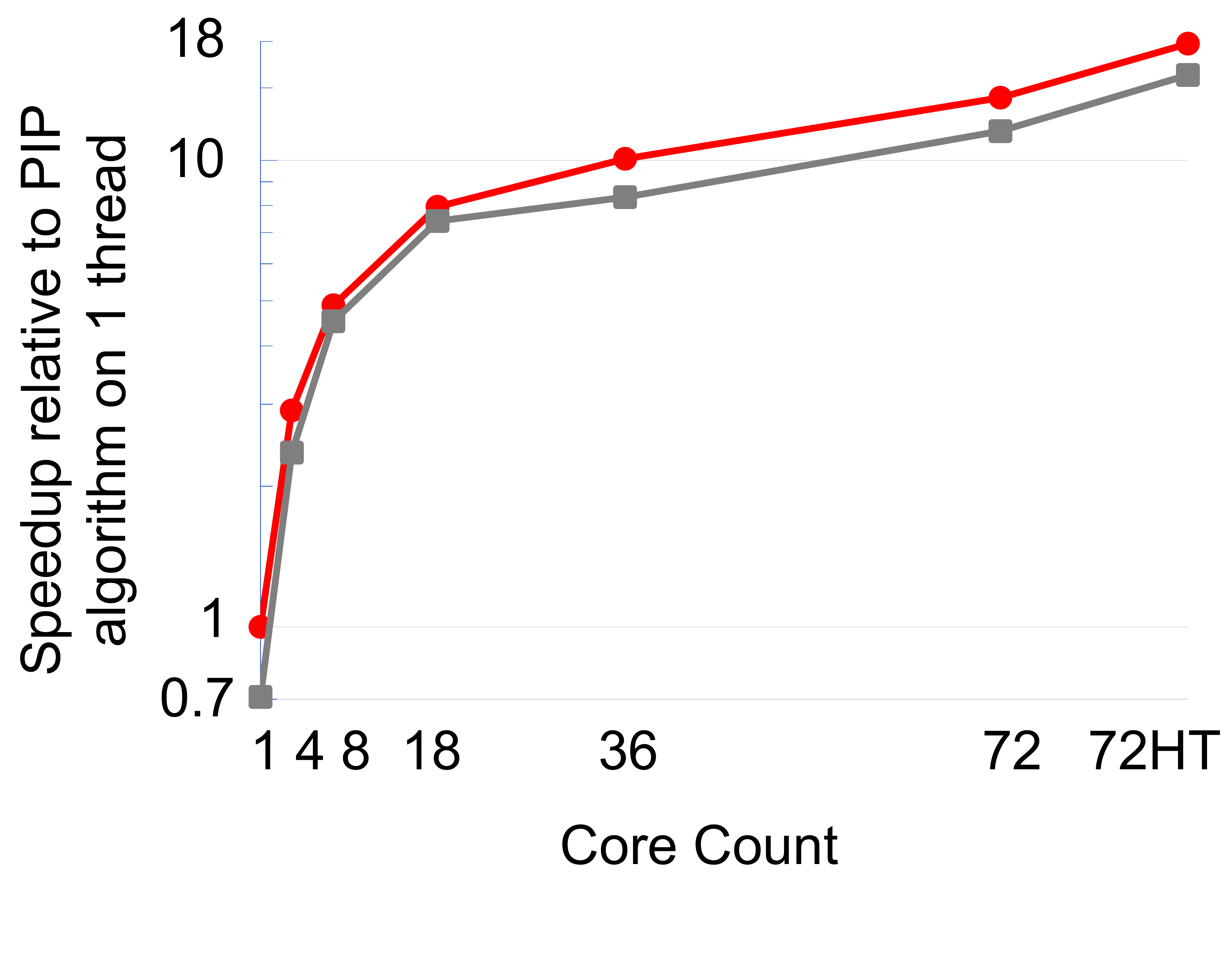}
\end{center}
\end{minipage}
\begin{minipage}[t]{0.24\textwidth}
\begin{center}
  List Contraction\\
  \includegraphics[width=\columnwidth]{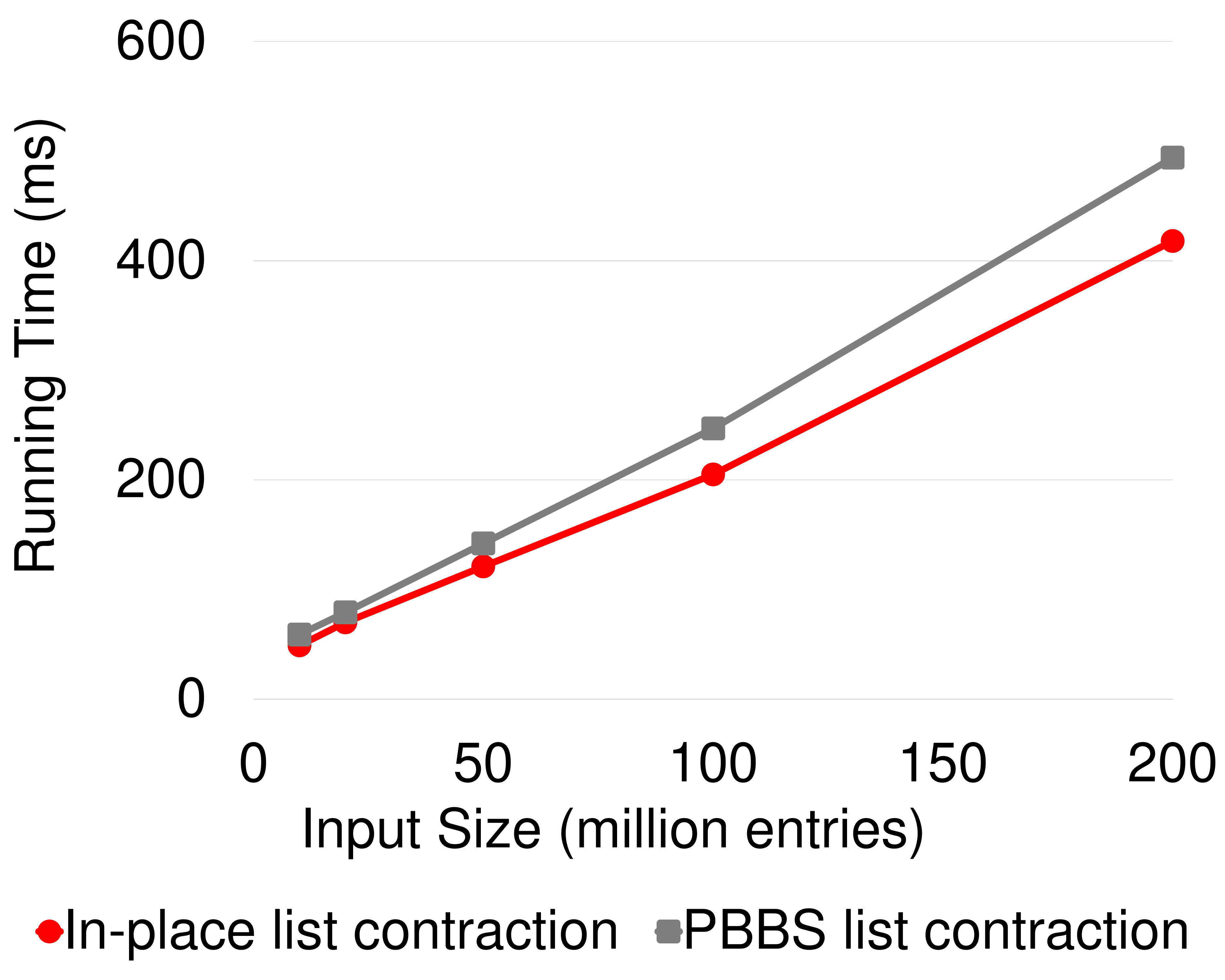}
  \includegraphics[width=\columnwidth]{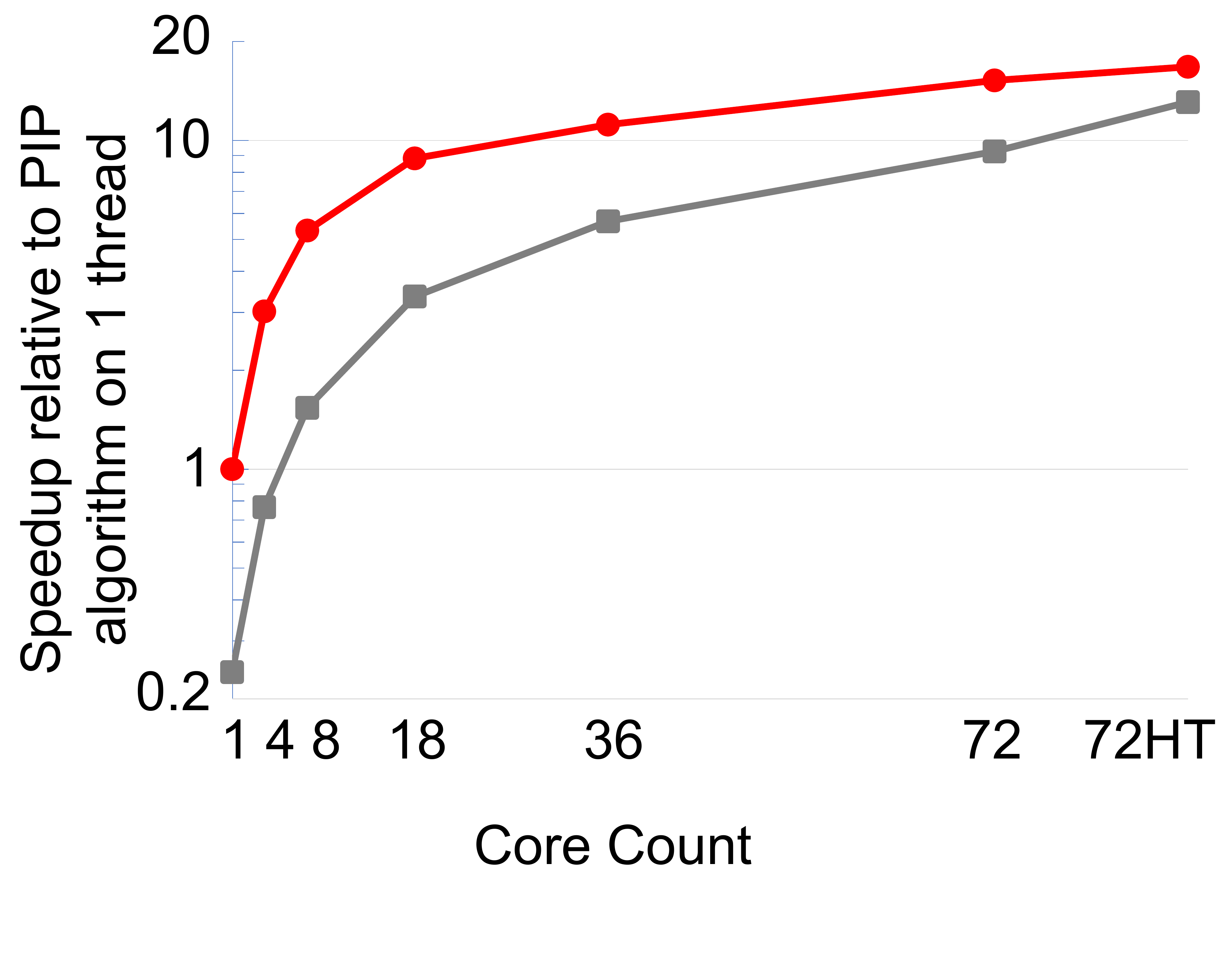}
\end{center}
\end{minipage}
\begin{minipage}[t]{0.24\textwidth}
\begin{center}
  Tree Contraction\\
  \includegraphics[width=\columnwidth]{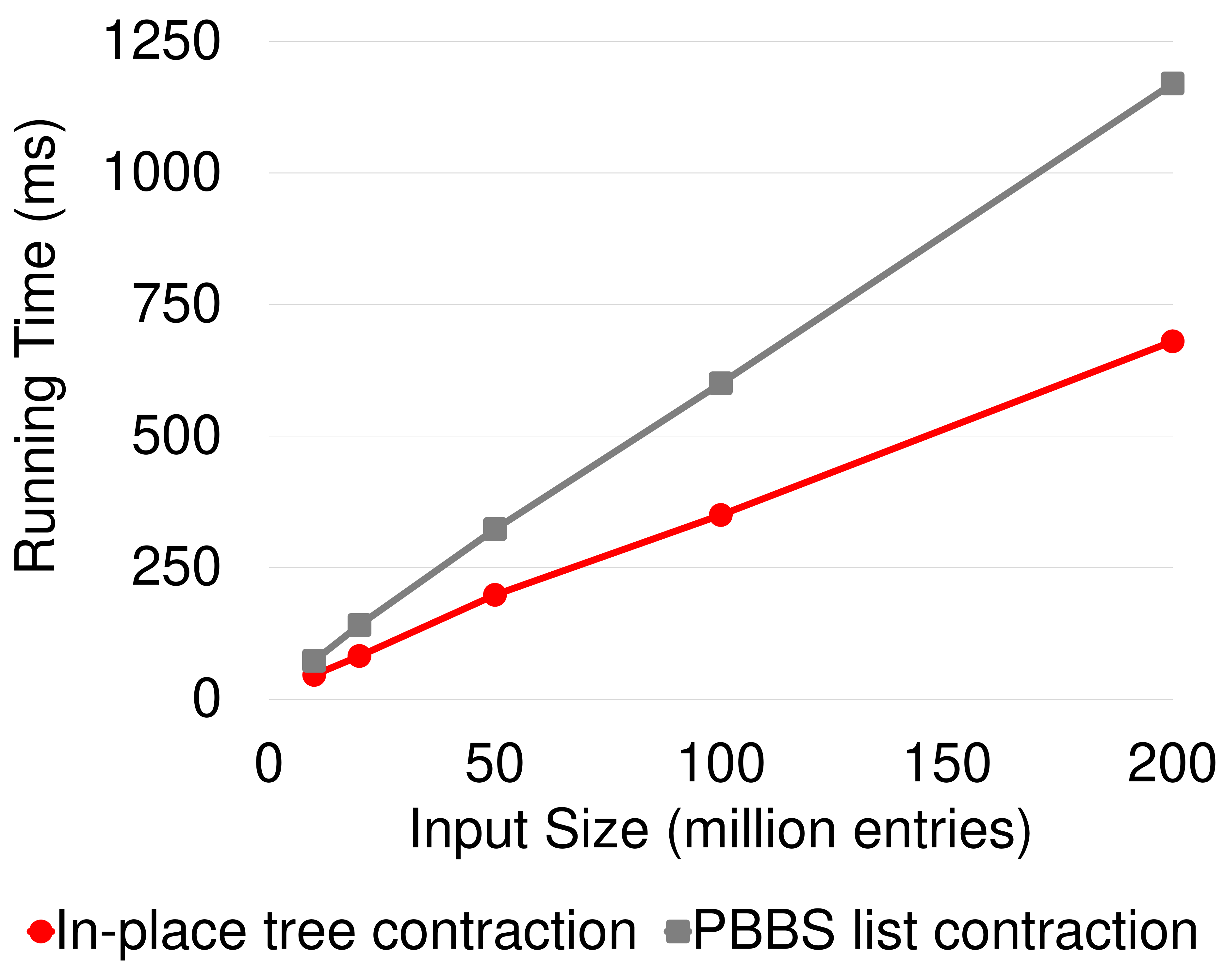}
  \includegraphics[width=\columnwidth]{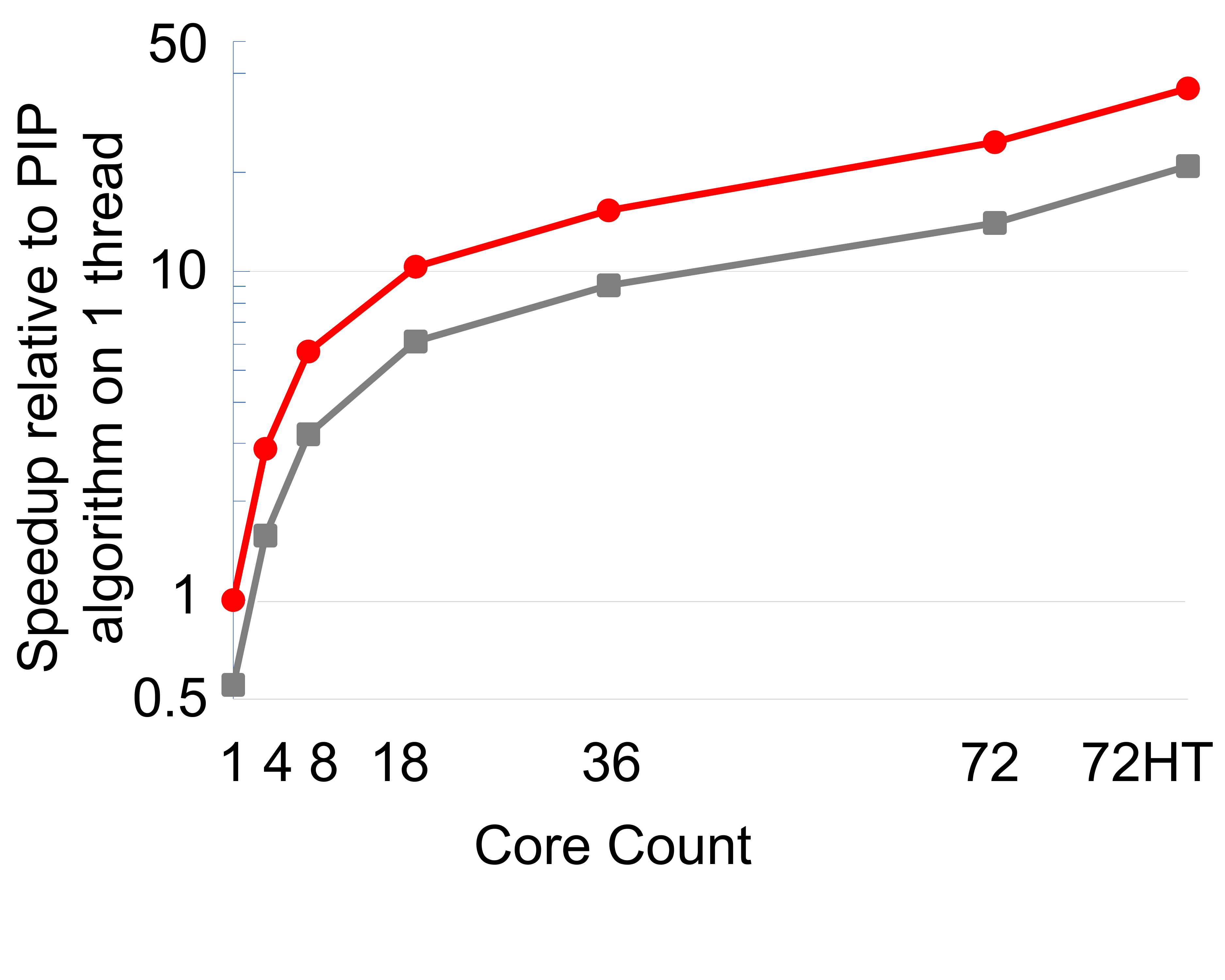}
\end{center}
\end{minipage}
\vspace{-1em}
\caption{ Running times and speedups for scan, filter, list contraction, and tree contraction.
The top figures show the running times of our parallel in-place algorithms and their non-in-place counterparts from PBBS for varying input sizes.
The bottom figures show the parallel speedups compared to the parallel PIP implementations on 1-thread, and the core count varies from 1 core to all 72 cores with two-way hyper-threading (72HT).}
\label{fig:exp}
\end{figure*}

\hide{\vspace{-1em}
\begin{center}
  Scan (scalability for 1000 million integers)\\
  \includegraphics[width=.45\columnwidth]{data/scan-time}
  \includegraphics[width=.7\columnwidth]{data/scan-scale}\\
  Filter (scalability for 1000 million entries)\\
  \includegraphics[width=.7\columnwidth]{data/filter-time}
  \includegraphics[width=.7\columnwidth]{data/filter-scale}\\
  List Contraction (scalability for 100 million integers)\\
  \includegraphics[width=.7\columnwidth]{data/list-time}
  \includegraphics[width=.7\columnwidth]{data/list-scale}\\
  Tree Contraction (scalability for 100 million integers)\\
  \includegraphics[width=.7\columnwidth]{data/tree-time}
  \includegraphics[width=.7\columnwidth]{data/tree-scale}\\}

\subsection{Scan and Filter}

For scan, we implement Algorithm~\ref{alg:pip-scan} and switch to a
sequential in-place scan when the subproblem size is less than 256.
For filter, we implement the PIP algorithm from
Section~\ref{sec:other-strong}, but we keep the implementation
work-efficient by setting the branching factor $k=\sqrt{n}$, and only apply
one level of recursion.  However, this increases the span to
$O(\sqrt{n}\log n)$ and has $O(\sqrt{n})$ rounds of global
synchronization (the $k$ chunks are processed one after another), which is a significant overhead.  We use the
following optimization to significantly reduce this overhead in practice.
We move the elements from multiple consecutive chunks in parallel as long as the destination of the
last chunk is before the original location of the first chunk.  We
apply a binary search in each round to find the maximum number of
chunks that can be moved in parallel.  If the unfiltered elements
  are distributed relatively evenly in the input and the output size is a constant fraction of the input, then the
algorithm requires logarithmic rounds to finish.

We compare our PIP algorithms to the non-in-place versions in
PBBS. The PBBS scan is the classic Blelloch scan
implementation~\cite{Blelloch89} and the filter is similar to our
implementation, but the output is stored in a separate
array.
In the PBBS filter, it first filters
each $\sqrt{n}$-sized chunk in parallel while each chunk is
  processed sequentially, and then moves the remaining elements to a
separate output array in parallel.

The running times and scalability (parallel speedup relative to the best algorithm on 1 thread, which was our PIP algorithm in all cases) for
scan and filter on different input sizes are shown in
Figure~\ref{fig:exp} and Tables~\ref{tab:scan-filter}--\ref{tab:scan-filter2}.  For filter, 50\% of the input entries are kept
in the output.  Our in-place scan is 30--45\% faster and our in-place
filter is about 25--30\% faster than their non-in-place counterparts
due to having a smaller memory footprint.  The speedups
are also competitive or better than the non-in-place versions.  For
filter, the fraction of elements in the output affects the
performance of both our algorithm and the PBBS algorithm. A larger output fraction increases
the number of rounds for movement and global synchronization in the PIP filter
algorithm.  In \cref{tab:filter-fraction}, we vary the output fraction
and show that our new algorithms range from about 2x faster (12.5\%
output) to having about the same performance (87.5\% output).

\begin{table}[t]
  \centering
    \begin{tabular}{crrrr}
    \toprule
    \multicolumn{1}{c}{Input Size} & \multicolumn{2}{c}{Scan} & \multicolumn{2}{c}{Filter} \\
    \multicolumn{1}{c}{(million)} & \multicolumn{1}{c}{PBBS} & \multicolumn{1}{c}{PIP} & \multicolumn{1}{c}{PBBS} & \multicolumn{1}{c}{PIP} \\
    \midrule
    100   & 29    & 23    & 23    & 19 \\
    200   & 50    & 39    & 43    & 36 \\
    500   & 120   & 89    & 111   & 89 \\
    1000  & 250   & 170   & 215   & 172 \\
    2000  & 459   & 336   & 422   & 335 \\
    \bottomrule
    \end{tabular}%
  \caption{Running times (in milliseconds) of the PBBS algorithms and our new PIP algorithms for scan and filter on 72 cores with hyper-threading. }
  \label{tab:scan-filter}%
    \begin{tabular}{crrrr}
    \toprule
    \multicolumn{1}{c}{Core} & \multicolumn{2}{c}{Scan} & \multicolumn{2}{c}{Filter} \\
    \multicolumn{1}{c}{count} & \multicolumn{1}{c}{PBBS} & \multicolumn{1}{c}{PIP} & \multicolumn{1}{c}{PBBS} & \multicolumn{1}{c}{PIP} \\
    \midrule
    1     & 3150  & 4170  & 2770  & 2150 \\
    4     & 1020  & 1230  & 861   & 701 \\
    8     & 548   & 630   & 524   & 393 \\
    18    & 308   & 331   & 301   & 244 \\
    36    & 280   & 255   & 254   & 191 \\
    72    & 265   & 192   & 222   & 179 \\
    72HT  & 250   & 170   & 215   & 172 \\
    \bottomrule
    \end{tabular}%
  \caption{Running times (in milliseconds) of the PBBS algorithms and our new PIP algorithms for scan and filter on varying core counts.}
  \label{tab:scan-filter2}%
\end{table}%

\begin{table}[t]%
  \centering
   \begin{tabular}{cccccc}
    \toprule
    Output fraction &  12.5\% & 25\%   & 50\%   & 75\% & 87.5\% \\
    PBBS filter & 189   & 202   & 215   & 234 & 252 \\
    PIP filter  & 94    & 118   & 172   & 212 & 254 \\
    \bottomrule
    \end{tabular}%
  \caption{Running times (in milliseconds) of the PBBS filter algorithm and our new PIP filter algorithm, with varying output fraction on 72 cores with hyper-threading.  The input is 1 billion integers.\label{tab:filter-fraction}}
\end{table}%

This experiment indicates that using the PIP scan and filter algorithms can improve
both the running time and memory usage over the non-in-place filter
algorithm, and is preferable when the input can be overwritten.
We note that ParlayLib~\cite{blelloch2020parlaylib}, the latest version of PBBS, also includes the in-place versions of scan and filter, and we plan to compare with these in the future.

\subsection{Deterministic Reservations}

Implementing the PIP algorithms for random permutation, list
contraction, and tree contraction is more challenging since they are more complicated than scan and filter.
However, the \dproperty{} can greatly simplify the implementation of these
algorithms, and we only need to design an efficient implementation
working on a prefix of the problem and run it iteratively.
Interestingly, the original implementations of these algorithms
in~\cite{shun2015sequential} are based on a framework named as the
\emph{deterministic reservations}~\cite{BlellochFinemanGibbonsEtAl2012} that
runs similarly in rounds, where each round processes a prefix of the
remaining elements.
We  first briefly overview the framework of deterministic reservations, and then discuss how we can modify the original
implementations to obtain new PIP algorithms for random permutation,  list contraction, and tree contraction.

Deterministic reservations
is a framework for iterates in a parallel algorithm to check if all of
their dependencies have been satisfied through the use of shared data structures,
and executing the ones that have been
satisfied~\cite{BlellochFinemanGibbonsEtAl2012}.  Deterministic reservations
proceeds in rounds, where on each round, each remaining iterate tries
to execute.  Iterates that fail to execute will be packed and
processed again in the next round.  To achieve good performance in
practice, instead of processing all iterates on every round, the
framework only works on a prefix of the remaining iterates (usually
around 1--2\% of the input iterates).
This naturally meets our requirement for controlling the number of elements to process in the \weak{} algorithms. We provide more details on the framework in the appendix.

\subsection{Random Permutation}

We implement the PIP random
permutation algorithm (Algorithm~\ref{alg:parallel-rp}) based on
deterministic reservations.  We have four implementations (\naive{},
\rpflat{}, \oneres{}, and \final{}), and each one improves upon the
previous one.
We compare our implementation to the code in PBBS library, and here we refer to it as \algname{RP-PBBS}. More details on our implementations are provided in the appendix.

\algname{RP-PBBS} runs in rounds, where each round processes a prefix of 2\% of the total number of  swaps.
This approach naturally fits with our PIP random permutation algorithm.
Our PIP algorithms also process 2\% of the total number of swaps, which empirically gave the best performance.

  \begin{table}[t]
  \centering
    \begin{tabular}{lccc}
    \toprule
    Phase: & Reserve  & Commit  & Cleaning  \\
    \midrule
    \algname{RP-PBBS}  & 2/1  & 2/1  & 0/0 \\
    \naive{} & 0/3  & 0/3  & 3/0 \\
    \rpflat{} & 1/2  & 1/2  & 2/0 \\
    \oneres{} & 1/1& 1/2  & 1/0 \\
    \final{} & 1/1  & 2/1  & 1/0  \\
    \bottomrule
    \end{tabular}%
  \caption{ Approximate number of sequential/random access per swap on each round for the five implementations of random permutation.}
  \label{tab:rp-count}%
  \end{table}

The overall goal in our implementations is to reduce the number of
memory accesses in the algorithm.  The original \algname{RP-PBBS}
implementation from PBBS needs roughly 4 sequential accesses and 2
random accesses per swap on each round.  In our new PIP algorithm, we
need a data structure to hold all associated memory accesses in the
prefix for auxiliary arrays $R$ and $H$.  In \naive{}, we simply use
concurrent hash tables~\cite{shun2014phase}, and this implementation incurs roughly 3
sequential accesses and 6 random accesses per swap on each round.  As
an improvement, \rpflat{} uses an array to replace the hash table for the $H$ array, which changes the number of sequential and random
accesses to 4 and 4, respectively.  \oneres{} removes one of  the reservations
(the one on Line~\ref{line:resend} of Algorithm~\ref{alg:parallel-rp}) and reduces
the number of random accesses to 3. 
Our final version, \final{}, uses an array
instead of a hash table for the part of $R$ that is accessed
contiguously by the iterates.  \final{} incurs 4 sequential accesses
and 2 random accesses per swap on each round, which is the same as in
the non-in-place version.  The approximate numbers of sequential and random
memory accesses for each implementation are given in \cref{tab:rp-count}.

\begin{figure*}
\begin{center}
  \includegraphics[width=.9\columnwidth]{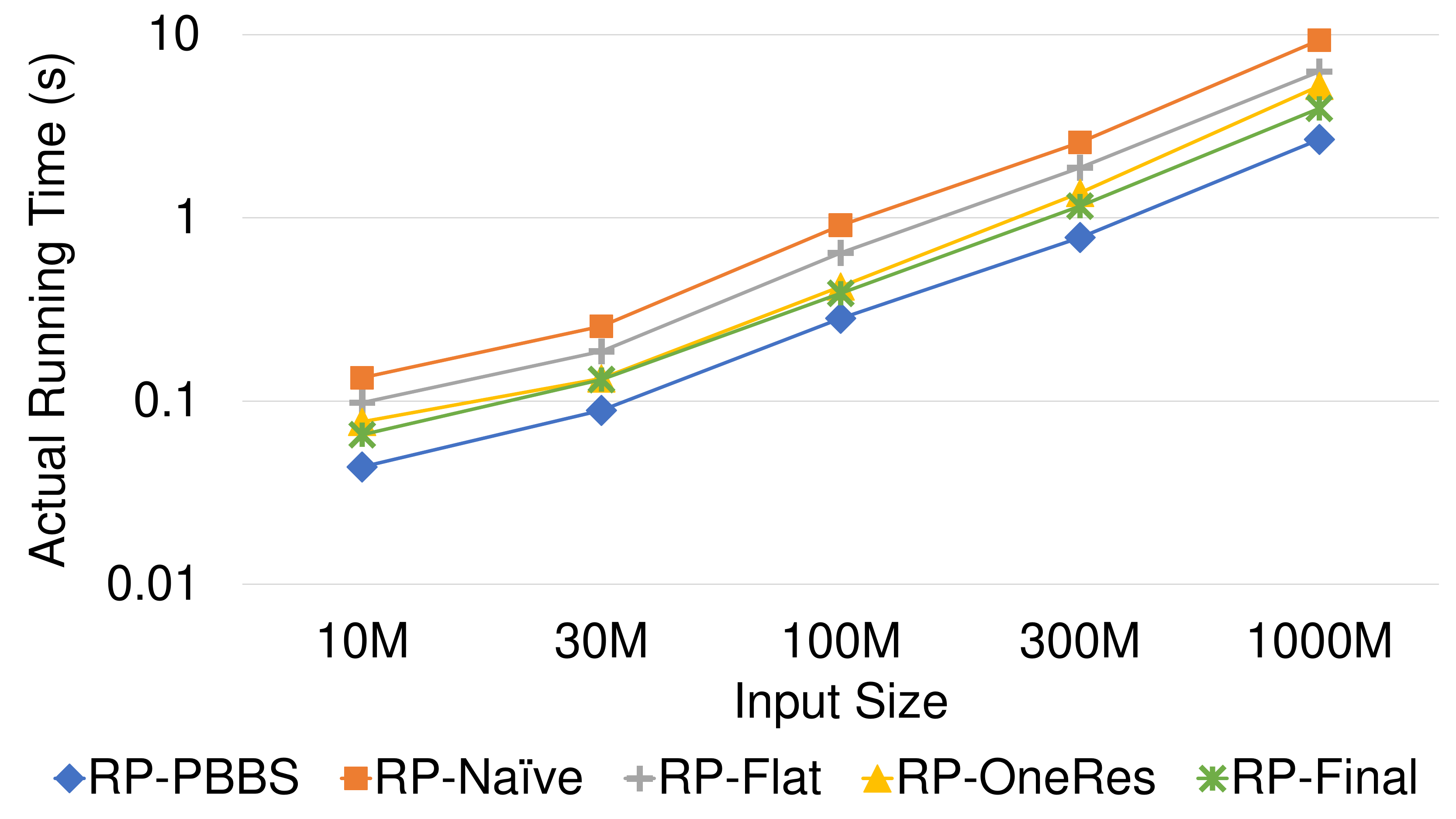}
  \includegraphics[width=.9\columnwidth]{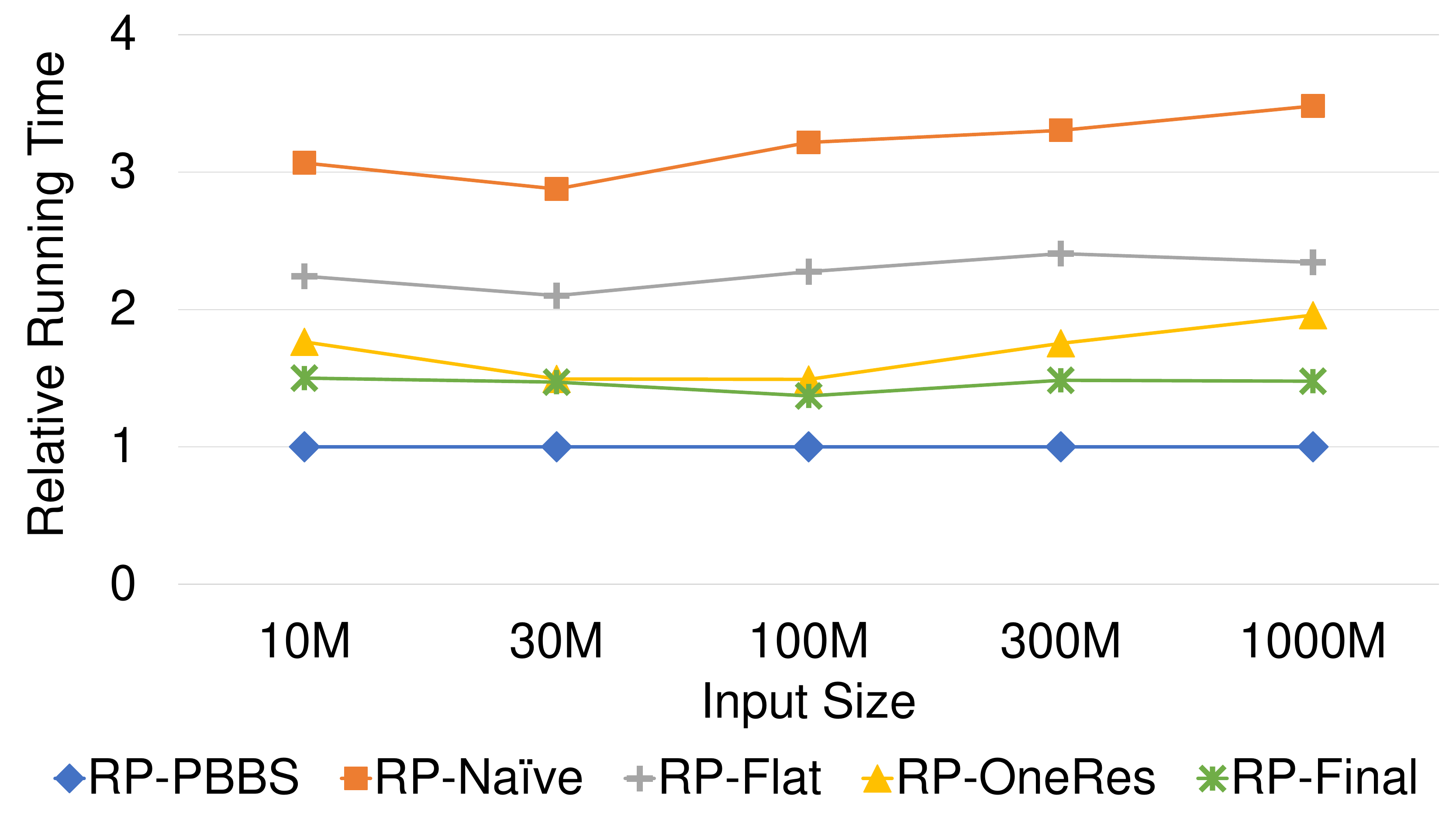}
\end{center}
\caption{ The actual running times (left) and running times relative to \algname{RP-PBBS} (right) of different implementations for random permutation on  72 cores with hyper-threading.  The input size varies from 10 million to 1 billion integers.
}
\label{fig:rp-exp}

\end{figure*}

We test the performance of our implementations on inputs of size 10 million
to 1 billion 64-bit integers, and compare them with the best non-in-place
counterpart, which is from PBBS (\algname{RP-PBBS}).  The actual running times are shown in Figure~\ref{fig:rp-exp} and
\cref{tab:rp-runtime}.
In Figure~\ref{fig:rp-exp} (left), we see that all of the
implementations have similar and consistent scalability with respect
to input size. In Figure~\ref{fig:rp-exp} (right), we show the
running times relative to \algname{RP-PBBS}.
\final{} only has a modest overhead of 30--40\%
over \algname{RP-PBBS}, while only using 4\% of the auxiliary space
required by \algname{RP-PBBS}.
We can further reduce
the additional space by shrinking the prefix size, at the expense of
having more rounds, and hence more global synchronization.  In
Table~\ref{tab:rp-additional-space}, we present the running times under different amounts of additional space for \algname{RP-Final}.

\begin{table}[t]
    \centering
    \begin{tabular}{lrrrrr}
    \toprule
       Input size: & \multicolumn{1}{l}{10M} & \multicolumn{1}{l}{30M} & \multicolumn{1}{l}{100M} & \multicolumn{1}{l}{300M} & \multicolumn{1}{l}{1000M} \\
    \midrule
    \algname{RP-PBBS}  & 43.7  & 89.2  & 283   & 781   & 2680 \\
    \naive{} & 134   & 256   & 910   & 2580  & 9330 \\
    \rpflat{} & 98.1    & 187   & 644   & 1880  & 6280 \\
    \oneres{} & 77.1  & 133   & 422   & 1370  & 5250 \\
    \final{} & 65.6  & 131   & 388   & 1160  & 3960 \\
    \bottomrule
    \end{tabular}%
  \caption{Running time (in milliseconds) of the five implementations of random permutation on 72 cores with hyper-threading.}
  \label{tab:rp-runtime}%
  \end{table}

  \begin{table}[t]
  \centering    
    \begin{tabular}{lcccc}
    \toprule
    Additional space &  0.4\% & 1\%   & 2\%   & 4\% \\
    Running time (ms) & 537   & 425   & 411   & 388 \\
    \bottomrule
    \end{tabular}%
  \caption{Running time with different restrictions on additional space for \final{} on 72 cores with hyper-threading.  The input is 100 million 64-bit integers.\label{tab:rp-additional-space}}
  \end{table}

\subsection{List Contraction and Tree Contraction}

Similar to random
permutation, for list contraction and tree contraction, we design our PIP
implementations based on the non-in-place implementations from
PBBS~\cite{shun2012brief,shun2015sequential}. In this case, the auxiliary space
in list contraction and tree contraction is the $R$ array (shown in Algorithm~\ref{algo:lc}), which has linear size.
In list contraction (\cref{algo:lc}) and tree contraction, $R[i]$ represents whether node $i$ can be
contracted in the current round.  The deterministic reservations framework also
needs to keep this information stored in another form, since it needs to pack the remaining
iterates for the next round.
We optimized the implementations to remove the $R$ array and store the information directly in the  deterministic reservations framework---instead of indicating if the $i$'th iteration can be contracted (using the $R$ array), we use an array in the framework that stores this information only for
the iterations in the current prefix.
Hence, our implementation does not need to explicitly store the $R$ array, and thus only needs storage proportional to the prefix size.

\begin{table}[t]
  \centering
  \centering
    \begin{tabular}{crrrrrr}
    \toprule
    Input Size & \multicolumn{3}{c}{List contraction} & \multicolumn{3}{c}{Tree contraction} \\
    (million) & &\multicolumn{1}{c}{PBBS} & \multicolumn{1}{c@{}}{PIP} & & \multicolumn{1}{c}{PBBS} & \multicolumn{1}{c@{}}{PIP} \\
    \midrule
    10   & & 59    & 49   & & 73    & 46 \\
    20   & & 79    & 70   & & 141   & 82 \\
    50   & & 142   & 131  & & 323   & 198 \\
    100  & & 247   & 205  & & 600   & 350 \\
    200  & & 494   & 418  & & 1170  & 680 \\
    \bottomrule
    \end{tabular}%
  \caption{Running times (in milliseconds) of the PBBS algorithms and our new PIP algorithms for list and tree contraction on 72 cores with hyper-threading.}
  \label{tab:lt-runtime}%
    \begin{tabular}{crrrrrr}
    \toprule
    Core & \multicolumn{3}{c}{List contraction} & \multicolumn{3}{c}{Tree contraction} \\
    count & &\multicolumn{1}{c}{PBBS} & \multicolumn{1}{c@{}}{PIP} & & \multicolumn{1}{c}{PBBS} & \multicolumn{1}{c@{}}{PIP} \\
    \midrule
    1    & & 13900 & 3350  & & 22800 & 12600 \\
    4    & & 4370  & 1110  & & 8020  & 4370 \\
    8    & & 2180  & 636   & & 3950  & 2210 \\
    18   & & 1010  & 384   & & 2060  & 1220 \\
    36   & & 592   & 302   & & 1390  & 1050 \\
    72   & & 362   & 220   & & 899   & 512 \\
    72HT & & 247   & 205   & & 603   & 350 \\
    \bottomrule
    \end{tabular}%
  \caption{Running times (in milliseconds) of the PBBS algorithms and our new PIP algorithms for list and tree contraction on varying core counts.}
  \label{tab:lt-scale}%
  \end{table}

We test the performance of our implementations on inputs of between 10 million to 200 million entries, each of which contain two 64-bit pointers.
The running times and speedups over the the PIP algorithm on 1 thread as a function of core counts are shown in \cref{fig:exp} and Tables~\ref{tab:lt-runtime}--\ref{tab:lt-scale}.
Since we eliminated the use of the $R$ array in our PIP algorithms and simplified the logic in the implementation (which required changes to the deterministic reservations framework), the parallel execution time improved by 15--20\% for list contraction and 60--70\% for tree contraction  compared to non-in-place version in PBBS by Shun et al.~\cite{shun2015sequential}.
It is interesting to observe from \cref{fig:exp} and Table~\ref{tab:lt-scale} that the speedups of the PIP algorithms over the PBBS implementations on one thread is larger than on 72 cores with hyper-threading.
We conjecture that on one thread, the simpler logic improves prefetching, whereas when using all hyper-threads, prefetching does not help as much due to the memory bandwidth already being saturated.

\subsection{Additional Space Usage}

 Table~\ref{tab:memory-usage} shows the input size and  total memory usage of our PIP algorithms.
We see that for our two \strong{} algorithms (scan and filter), the auxiliary space overhead is negligible (less than 0.1\%), and for the three \weak{} algorithms  (random permutation, list contraction, and tree contraction), the best performance is achieved when the space overhead is between 0.9--3.7\%, which is still much smaller than the input size.
We can further reduce the space overhead for the \weak{} algorithms at the cost of higher running time (e.g., see Table~\ref{tab:rp-additional-space}).
In contrast, the existing non-in-place algorithms for these problems require  additional space proportional to the input size.

\begin{table}[t]
  \centering
  \begin{tabular}{@{  }>{\centering}p{3cm}@{  }>{\centering}p{1.35cm}>{\centering}p{1.6cm}p{1cm}<{\centering}}
    \toprule
    Problem & Input size (MB) & Memory usage (MB) & Over-head \\
    \midrule
    Scan  & 7629.4 & 7636.2 & $<$0.1\% \\
    Filter & 7629.4 & 7636.9 & $<$0.1\% \\
    Random permutation & 762.9 & 791.2 & 3.7\% \\
    List contraction & 762.9 & 773.5 & 1.4\% \\
    Tree contraction & 1144.4 & 1154.9 & 0.9\% \\
    \bottomrule
    \end{tabular}%
  \caption{Memory usage of our algorithms on the five experiments  from \cref{fig:summary}.}
  \label{tab:memory-usage}%
  \end{table}

\section{Conclusion}
In this paper, we defined two models for analyzing parallel
in-place algorithms. We presented new parallel in-place
algorithms for scan, filter, partition, merge, random permutation,
list contraction, tree contraction, connectivity, biconnectivity, and
minimum spanning forest. We  implemented several of our algorithms, and
showed experimentally that they are competitive or outperform
state-of-the-art non-in-place parallel algorithms for the same
problems.

\subsection*{Acknowledgements}
We thank Guy Blelloch for letting us use his group's machine for our experiments.
This research was supported by DOE Early Career Award \#DE-SC0018947, NSF
CAREER Award \#CCF-1845763, Google Faculty Research Award, DARPA SDH Award \#HR0011-18-3-0007, and
Applications Driving Architectures (ADA) Research Center, a JUMP
Center co-sponsored by SRC and DARPA.

\bibliographystyle{abbrv}

\appendix
\section{Appendix}
\subsection{Deterministic Reservations}
We now describe the framework of deterministic reservations in more detail.
Each
round of deterministic reservations consists of a reserve phase, followed by a synchronization point, and then a commit phase.
In Algorithms~\ref{alg:parallel-rp} and~\ref{algo:lc},
the reserve phase is first parallel for-loop that
 writes to locations in a
shared data structure $R$, corresponding to the steps (iterates). This is used to resolve conflicts among different steps that may modify the same memory locations.
The commit phase is the second parallel for-loop that checks $R$ to see if the reservations for the step were successful (all of its writes end up as the final value in $R$); if so, the step is executed.
Iterates that fail to execute will be packed by the framework and will retry in the next round.
This is repeated until no iterates remain.

To achieve the best practical performance, instead of trying all iterates simultaneously (many of which will fail, and waste work), the framework only works on a prefix of all the iterates.
After each round, the failed iterates are packed and new iterates are added to the prefix so that we have a sufficient number of iterates for the next round.
In practice, we pick a prefix of 1--2\% of the overall number of iterates, which
gives a good trade-off between  work and parallelism, and at the same time it naturally meets our requirement for controlling the execution size in the \weak{} algorithm.

For PIP algorithms, we usually need an additional phase in each round, which we refer to as the \emph{cleaning phase}.
In classic parallel algorithms based on deterministic reservations, we only initialize the reservation array $R$ at the beginning of the algorithm (line 1 in Algorithm~\ref{alg:parallel-rp} and~\ref{algo:lc}). However,
for PIP algorithms, we need to clean the data and reuse the space for the next round.

\subsection{Random Permutation Implementations}
\myparagraph{\naive{}}.
Our first version uses parallel hash tables to replace the auxiliary arrays $R$ and $H$, and we refer to this implementation as this algorithm \naive{}.
Unfortunately, compared to \algname{RP-PBBS}, \naive{} has poor performance in practice.
On input sizes between 10 million to 1 billion, \naive{} requires 2.9--3.5x running time of the \algname{RP-PBBS} algorithm, as shown in Figure~\ref{fig:rp-exp}.
The reason is that, the original \algname{RP-PBBS} implementation from PBBS needs roughly 4 sequential accesses and 2 random accesses per swap on each round, and the 2 random accesses per swap is to set and check $R[H[i]]$ per round.
Since we cannot keep $R$ and $H$ explicitly but we need to use parallel hash tables, \naive{} incurs 6 random accesses per swap (setting and checking $R[i]$, $H[i]$, and $R[H[i]]$), and 3 more sequential accesses for cleaning. The cleaning for both $R$ and $H$ is done by simply clearing the entire hash tables, which we found to be more efficient than doing individual hash table deletions.

\myparagraph{\rpflat{}: packing $H[i]$ as an array}.
In \algname{RP-PBBS}, the value of $H[i]$ is computed by a hash function.
Since generating a good hash function in practice is expensive and this value is used in a variety of places, we store it in an array $H$ to avoid recomputation.
We note that in Algorithm~\ref{alg:parallel-rp}, the access of $H[i]$ is always associated with index $i$, and we  only need to access the values of $i$ that are the sources of the swaps in the prefix for the current round.
Hence, we only keep an array of the size of the prefix, and use it to store the corresponding value in $H$ for each iterate in the prefix.
As such, we need to modify the code of deterministic reservations so that it also provides the index of each swap $i$ in the overall list of active iterates, so that we can look up its value in $H$.
By doing so, we reduce the number of hash table inserts/queries of each swap from three to two (by eliminating the random accesses to $H[i]$).
The cleaning phase incurs  two serial accesses per swap since we need to clear the hash table for $H$, which is twice the size of the prefix (as done in \naive{}, we clear the entire hash table).
We refer to this implementation as \rpflat{}.

\myparagraph{\oneres{}: using only one reservation}.
The previous implementations use two updates per swap (Lines~\ref{line:line6} and~\ref{line:resend} in Algorithm~\ref{alg:parallel-rp}).
However, we observe that we can apply just one  update on Line~\ref{line:resend}, and avoid the update on Line~\ref{line:line6}.
Then, the if-condition on Line~\ref{line:ressuc} is modified to
$$(R[i] = \bot ~\mbox{||}~ R[i] = i) ~\mbox{\&\&}~ R[H[i]] = i$$
Here $\bot$ indicates the initialized value of hash table, meaning that key $i$ is not found in the hash table.
The advantage is that we reduce the number of hash table insertions of each swap from two to one, which enables us to use a hash table of half the size and reduce the overall memory footprint.
The cleaning phase incurs one sequential access per swap to clear the hash table.
We refer to this implementation as \oneres{}.

\myparagraph{\final{}: packing the consecutive part of $R[i]$ in an array}.
With the modified if-condition on Line~\ref{line:ressuc}, for the swap of iterate $i$ we need to check the value of $R[i]$.
We can use a similar approach as done for $H[i]$ in \rpflat{}, so that when accessing $R[i]$, instead of a random access to a hash table, we only need a sequential access to an array.
Unfortunately,
we do not have the inverse mapping from each swap to its index in the active list.
However, we do know that the indices are consecutive for all newly added swaps in a round.
Hence, we modify the code for deterministic reservations so that it provides the range of the consecutive indices and the range of the keys.
We can then use an array with the size of the active list.
For reserving $R[H[i]]$, we first check if the key falls in the range. If so, we map it to the associated index and update it in the array;
otherwise, we insert it into the hash table.
 Checking the values in the commit phase is done similarly. In this way, the accesses of $R[i]$ are mostly serial, and so we save one random access per swap on each round.
 For the cleaning phase, we clear the consecutive array, and for any swaps whose target is in the hash table, we delete it from the hash table (unlike our earlier implementations, we do not scan and clear the entire hash table).
Most of the swaps will incur a sequential access for clearing it in the consecutive array.
However, there is a random access to the hash table for clearing, if the swap target is in the hash table (i.e., it is not a newly added swap for this round).
This happens infrequently since most of the swaps are successful if we pick a small prefix in each round.
 We refer to this implementation as \final{}.
The approximate numbers of sequential and random memory accesses per swap on each round are shown in Table~\ref{tab:rp-count}.

\end{document}